\newtheorem{theorem}{Theorem}
\newtheorem{step}{Step}
\newtheorem{fact}{Fact}
\renewenvironment{proof}[1][\proofname]{\par
  \pushQED{\qed}%
  \normalfont \topsep6\p@\@plus6\p@\relax
  \trivlist
  \item\relax
        {\bf
    #1\@addpunct{.}}\hspace\labelsep\ignorespaces
}{%
  \popQED\endtrivlist\@endpefalse
}
\title{A Rigorous Proof of the Index Theorem for Economists}
\author{Yuhki Hosoya\thanks{TEL: +81-90-5525-5142, E-mail: ukki(at)gs.econ.keio.ac.jp}\\ Faculty of Economics, Chuo University\thanks{742-1, Higashinakano, Hachioji-shi, Tokyo, 192-0393, Japan.}}
\date{}
\begin{document}
\maketitle

\begin{abstract}
This paper provides a rigorous and gap-free proof of the index theorem used in the theory of regular economy. In the index theorem that is the subject of this paper, the assumptions for the excess demand function are only several usual assumptions and continuous differentiability around any equilibrium price, and thus it has a form that is applicable to many economies. However, the textbooks on this theme contain only abbreviated proofs, and no known monograph contains a rigorous proof of this theorem. Hence, the purpose of this paper is to make this theorem available to more economists by constructing a readable proof.

\vspace{12pt}
\noindent
\textbf{Keywords}: index theorem, regular economy, transversality.

\vspace{12pt}
\noindent
\textbf{JEL codes}: C65, D41, D51

\vspace{12pt}
\noindent
\textbf{MSC2020 codes}: 91B24, 57R25
\end{abstract}

\section{Introduction}
How many equilibrium prices exist? Arrow and Debreu (1954) showed that at least one equilibrium price exists. It seems that economists of this period believed that the fewer the number of equilibria, the better. Therefore, Arrow et al. (1959) proposed a sufficient condition for the existence of a unique equilibrium price. However, their condition is very complicated and does not seem to be valid in many economies.

Debreu (1970) proved that the number of equilibrium prices is at least finite in ``almost all'' economies. In his view, it was difficult to prove that there is exactly one equilibrium price, and thus he proved that at least there are only a finite number of equilibria. This result had been refined and reformulated by many economists. This research area is now known as the theory of regular economy.

The most central result in the theory of regular economy is called the index theorem. This theorem states that the sum of ``indices'' computed by a certain method for equilibrium prices is equal to $1$. This result is powerful. Because ``indices'' must be able to be summed, the number of equilibrium prices must be a finite odd number. We can immediately reproduce the result of Debreu (1970) using this theorem.

The problem is that no straightforward proof of this index theorem is known. The most famous index theorem currently published in a form applicable to many economies appears to be Proposition 5.6.1 of Mas-Colell (1985). However, the proof of this theorem contains many gaps that ordinary readers will find impossible to fill.

Therefore, in this paper, we focus on the proof of this result and provide a rigorous and gap-free proof of this theorem in the shortest manner. The proof presented in this paper requires basic knowledge of differential topology supplied by Guillemin and Pollack (1974). Therefore, in this paper, we construct a proof under the assumption that the contents of the above book are known. However, since Guillemin-Pollack's formulation of differential topology is a bit peculiar, we have chosen to extract only the claims necessary for this paper and offer them to readers.

The proof in this paper requires a classification theorem for one-dimensional manifolds. However, the problem is that a well-written proof of the classification theorem contained in a book is missing. Therefore, in the appendix of this paper, we present a proof of this classification theorem.

The structure of this paper is as follows. In Section 2, we explain a part of the knowledge of differential topology contained in Guillemin-Pollack's book that is necessary to understand the proof. We do not prove any of the results that appear in this section, but the proofs are either in Guillemin-Pollack or can be easily reconstructed from what is written in Guillemin-Pollack. Section 3 contains the assertion of the main result and its proof. In Section 4, we explain the relationships between Proposition 5.6.1 of Mas-Colell (1985) and our main result, and the proof and its gaps in Mas-Colell's book. In the appendix, we prove the classification theorem for one-dimensional manifolds.

\section{Preliminaries}
In this paper, we assume that readers are sufficiently familiar with the results in Guillemin and Pollack (1974). In particular, we assume that readers know the definition of manifolds, tangent spaces and the tangent bundle, the derivatives of smooth functions on manifolds, transversality, and oriented manifolds. However, the definition of manifolds in Guillemin and Pollack (1974) is not standard in differential topology, and with very few exceptions such as Milnor (1965), another definition has been used. Thus, readers may not be familiar with the manifolds treated in this paper, even though they have appropriate deep knowledge on manifolds. Therefore, in this section, we present the definitions of the terms in differential topology treated in this paper and the assertion of some theorems. All of the results in this section have been proved or can be easily verified from the results proved in Guillemin and Pollack (1974).

Note that, although Guillemin and Pollack (1974) treated only $C^{\infty}$ manifolds and functions, in most cases, their proofs work well for $C^k$ manifolds and functions, even if $k\neq \infty$. However, several techniques depend on the assumption $k=+\infty$. For example, Sard's theorem does not necessarily hold if the function is not $C^{\infty}$. The concept of the Morse function cannot be appropriately defined in $C^1$ manifolds. The tangent bundle $T(X)$ is not necessarily a $C^1$ manifold when $X$ is a $C^1$ manifold. We avoid these technical problems in this paper. First, we use Sard's theorem only for a $C^{\infty}$ function defined on a $C^{\infty}$ manifold. Second, we consider the tangent bundle only for a $C^{\infty}$ manifold. Third, we avoid the use of the Morse functions.\footnote{For the benefit of readers, we will note which part in the following subsections corresponds to which section of Guillemin-Pollack. First, all the basic knowledge of manifolds and tangent spaces is in Sections 1.1-2 of Guillemin-Pollack, except for the inverse function theorem, which is in Section 1.3. Knowledge of transversality is argued in Section 1.5, but the prerequisite is knowing the inverse image theorem, which is in Section 1.4. Knowledge of tangent bundles is included in Section 1.8. Knowledge of the manifolds with boundary is discussed in Section 2.1. The orientations are analyzed in detail in Section 3.2. The classification theorem for 1-dimensional manifolds is in appendix 2, but as noted in the main text, Guillemin-Pollack's proof cannot be adopted for our result. Hence, See our appendix.}

\subsection{Manifolds}
First, suppose that $X\subset \mathbb{R}^N$ and let $f:X\to \mathbb{R}^M$ be a continuous function. Choose any $x\in X$ and let $k\in\mathbb{N}\cup \{\infty\}$. We say that $f$ is $C^k$ around $x$ if and only if there exists a $C^k$ function $F:U\to \mathbb{R}^M$ such that $U$ is an open neighborhood of $x$ in $\mathbb{R}^N$ and $F(y)=f(y)$ whenever $y\in U\cap X$. As usual, a function that is $C^k$ everywhere is called a $C^k$ function.

A function $f:X\to Y$ is called a {\bf homeomorphism} if and only if $f$ is bijective and both $f$ and $f^{-1}$ are continuous. Suppose that $f:X\to Y$ is a homeomorphism. We say that $f$ is a $C^k$ {\bf diffeomorphism} if and only if both $f$ and $f^{-1}$ are $C^k$. A nonempty set $X\subset \mathbb{R}^N$ is called an {\bf $n$-dimensional $C^k$ manifold} if and only if for every $x\in X$, there exists a $C^k$ diffeomorphism $\phi:U\to V$ such that $U$ is an open set in $\mathbb{R}^n$ and $V$ is an open neighborhood of $x$ in $X$. Such a $\phi$ is called a {\bf local parametrization around $x$}. The number $n$ is called the {\bf dimension of $X$} and denoted by $\dim X$. It is known that if $X$ is a $C^k$ manifold, then $\dim X$ is uniquely determined.

Sometimes, $C^0$ manifolds are considered. A function is said to be $C^0$ if and only if it is continuous, and a set $X\subset \mathbb{R}^N$ is called an $n$-dimensional $C^0$ manifold if and only if every $x\in X$ has an open neighborhood that is homeomorphic to some open set in $\mathbb{R}^n$. A $C^0$ manifold is sometimes called a {\bf topological manifold}. However, in $C^0$ manifolds, we cannot define the tangent space, which is explained later. This is very restrictive for our argument, and thus we do not use a $C^0$ manifold in this paper, except in the appendix. In this connection, we always assume that $k\ge 1$ in the main text.

We define $\mathbb{R}^0=\{0\}$. Hence, $0$-dimensional manifolds can be defined. A set $X$ is a $0$-dimensional $C^k$ manifold if and only if it is a discrete set.

Let $X$ be an $n$-dimensional $C^k$ manifold, and choose any $x\in X$. Choose any local parametrization $\phi$ around $x$. Let $\phi(z)=x$ and suppose that $D\phi(z)$ denotes the Fr\'echet derivative of $\phi$ at $z$. Let $T_x(X)$ denote the range of $D\phi(z)$. Then, it is known that $T_x(X)$ is independent of the choice of $\phi$. We call this linear space $T_x(X)$ the {\bf tangent space} of $X$ at $x$. It is known that the dimension of $T_x(X)$ is $n$, and for every $v\in \mathbb{R}^N$, $v\in T_x(X)$ if and only if there exist $\varepsilon>0$ and a $C^1$ curve $c:(-\varepsilon,\varepsilon)\to X$ such that $c(0)=x$ and $c'(0)=v$.

Next, let $X,Y$ be $C^k$ manifolds, and let $f:X\to Y$ be a $C^k$ function. Choose $x\in X$, and a local parametrization $\phi$ around $x$. Let $\phi(z)=x$. Then, $g=f\circ \phi$ is a $C^k$ function defined on an open set $U\subset \mathbb{R}^n$, where $n$ is the dimension of $X$. Hence, the Fr\'echet derivative $Dg(z)$ can be defined. Choose any $v\in T_x(X)$, and let $D\phi(z)w=v$. Then, we can define $df_x(v)=Dg(z)w$, and it can be shown that the left-hand side is independent of the choice of $\phi$. Moreover, $df_x(v)\in T_{f(x)}(Y)$ for all $x\in X$ and $v\in T_x(X)$. This linear function $df_x$ is called the {\bf derivative} of $f$ at $x$. Suppose that $v\in T_x(X)$ and $c:(-\varepsilon,\varepsilon)\to X$ is a $C^1$ function that satisfies $c(0)=x$ and $c'(0)=v$. For a $C^k$ function $f:X\to Y$, if we define $d=f\circ c$, then we can check that $df_x(v)=d'(0)$.

Suppose that $X,Y$ are $C^k$ manifolds whose dimensions are the same, and $f:X\to Y$ is a $C^k$ function. Choose $x\in X$ and suppose also that $df_x$ is a bijection from $T_x(X)$ onto $T_{f(x)}(Y)$. Then, there exists an open neighborhood $U$ of $x$ such that $F(U)$ is an open neighborhood of $f(x)$ and the restriction of $f$ is a $C^k$ diffeomorphism between $U$ and $f(U)$. This result is known as the {\bf inverse function theorem}.

\subsection{Submanifolds and Transversality}
Let $X$ be an $n$-dimensional $C^k$ manifold, and $Y\subset X$. We call $Y$ a {\bf submanifold} of $X$ if $Y$ itself is a $C^k$ manifold. It is known that if $Y$ is a submanifold of $X$, then $T_x(Y)\subset T_x(X)$. If $\dim Y=m$, then $n-m$ is called the {\bf codimension} of $Y$, and denoted by $\mbox{codim}~Y$.

Suppose that $X,Y$ are $C^k$ manifolds, $Z$ is a submanifold of $Y$, and $f:X\to Y$ is $C^k$. We say that $f$ is {\bf transversal to $Z$} if and only if for every $x\in f^{-1}(Z)$,
\[\mbox{Im}(df_x)+T_{f(x)}(Z)=T_{f(x)}(Y).\]
Then, the following fact is known.

\begin{fact}
Suppose that $X,Y$ are $C^k$ manifolds, $Z$ is a $C^k$ submanifold of $Y$, and $f:X\to Y$ is a $C^k$ function that is transversal to $Z$. If $S=f^{-1}(Z)$ is nonempty, then $S$ is also a submanifold of $X$, and $\mbox{codim}~S=\mbox{codim}~Z$.
\end{fact}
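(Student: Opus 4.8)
\section*{Proof proposal}

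The plan is to reduce the assertion to the preimage theorem for a regular value, which is available in Guillemin and Pollack (1974). Being a submanifold of $X$, and the value of the codimension, are local properties, so it suffices to show that each $x\in S$ admits an open neighborhood $U$ in $X$ such that $S\cap U$ is a $C^k$ submanifold of $U$ with $\mbox{codim}(S\cap U)=\mbox{codim}~Z$. Fix such an $x$ and set $y=f(x)\in Z$.

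The first step is to straighten $Z$ locally inside $Y$. Since $Z$ is a $C^k$ submanifold of $Y$, there is a local parametrization $\psi:\Omega\to W$ of $Y$ around $y$, with $\Omega$ open in $\mathbb{R}^{n}$ (where $n=\dim Y$), such that $\psi^{-1}(Z\cap W)=\Omega\cap(\mathbb{R}^{\ell}\times\{0\})$, where $\ell=\dim Z$ and $c:=n-\ell=\mbox{codim}~Z$. Let $\pi:\mathbb{R}^{\ell}\times\mathbb{R}^{c}\to\mathbb{R}^{c}$ be the projection and put $g:=\pi\circ\psi^{-1}:W\to\mathbb{R}^{c}$. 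Then $g$ is $C^k$ and, being the composite of the inverse of a diffeomorphism with a linear surjection, each $dg_{y'}$ is surjective; moreover $Z\cap W=g^{-1}(0)$, and since the restriction of $\psi$ to $\Omega\cap(\mathbb{R}^{\ell}\times\{0\})$ is a local parametrization of $Z$, we have $\ker(dg_{y'})=T_{y'}(Z)$ for every $y'\in Z\cap W$.

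The second step is the transversality computation. Set $U=f^{-1}(W)$, an open neighborhood of $x$ in $X$, and note that $S\cap U=(g\circ f)^{-1}(0)$. I claim $0$ is a regular value of $g\circ f:U\to\mathbb{R}^{c}$. Let $x'\in(g\circ f)^{-1}(0)$, so $f(x')\in Z\cap W$ and $d(g\circ f)_{x'}=dg_{f(x')}\circ df_{x'}$. Applying the surjective map $dg_{f(x')}$ to the transversality identity $\mbox{Im}(df_{x'})+T_{f(x')}(Z)=T_{f(x')}(Y)$: the image of the right-hand side is all of $\mathbb{R}^{c}$, the term $dg_{f(x')}(T_{f(x')}(Z))$ vanishes since $T_{f(x')}(Z)=\ker(dg_{f(x')})$, and $dg_{f(x')}(\mbox{Im}(df_{x'}))=\mbox{Im}(d(g\circ f)_{x'})$. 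Hence $d(g\circ f)_{x'}$ is surjective, proving the claim. By the preimage theorem for a regular value, $S\cap U=(g\circ f)^{-1}(0)$ is a $C^k$ submanifold of $U$ of codimension $c=\mbox{codim}~Z$. Since $x\in S$ was arbitrary, $S$ is a $C^k$ submanifold of $X$ with $\mbox{codim}~S=\mbox{codim}~Z$.

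I expect the main obstacle to be the first step: justifying, from the definition of submanifold used here (that $Z\subset Y$ is itself a $C^k$ manifold), both the existence of the slice parametrization $\psi$ of $Y$ and the identity $T_{y'}(Z)=\ker(dg_{y'})$. This is exactly the point where one must invoke the local normal-form theory for submanifolds and immersions in Guillemin and Pollack; once it is in hand, the remaining argument is the routine chain-rule and linear-algebra manipulation given above. Note also that using the regular-value version of the preimage theorem to prove this transversal version is not circular, since the former is proved directly in Guillemin and Pollack without reference to transversality.
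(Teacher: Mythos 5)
Your proof is correct, and it is essentially the standard argument from Guillemin and Pollack (Section 1.5): write $Z$ locally as the zero set of a submersion $g$ via a slice chart, observe that transversality of $f$ to $Z$ is equivalent to $0$ being a regular value of $g\circ f$, and invoke the preimage theorem. The paper itself states this Fact without proof, explicitly deferring to Guillemin--Pollack, so there is no in-paper argument to compare against; your write-up correctly identifies the one genuine technical point (existence of the slice parametrization and the identity $T_{y'}(Z)=\ker(dg_{y'})$ under the ``$Z$ is itself a manifold'' definition of submanifold), which is indeed supplied by the local immersion/embedding theory in that reference.
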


\subsection{Tangent Bundle}
Suppose that $X$ is an $n$-dimensional $C^{\infty}$ manifold. The following set
\[T(X)=\{(x,v)|x\in X,\ v\in T_x(X)\}\]
is called the {\bf tangent bundle} of $X$. It is known that $T(X)$ is a $2n$-dimensional $C^{\infty}$ manifold, and if $(x,v)\in T(X)$ and $\phi$ is a local parametrization around $x$ in $X$, then
\[(z,w)\mapsto (\phi(z),d\phi_z(w))\]
is a local parametrization around $(x,v)$ in $T(X)$.

By construction, it is easy to show that $T_{(x,v)}(T(X))=(T_x(X))^2$.

\subsection{Manifolds with Boundary}
We need the definition of manifolds with boundary. First, let $H^n=\{x\in \mathbb{R}^n|x_n\ge 0\}$. A set $X\subset \mathbb{R}^N$ is called an {\bf $n$-dimensional $C^k$ manifold with boundary} if and only if for every $x\in X$, there exists a $C^k$ diffeomorphism $\phi:U\to V$ such that $U$ is a relatively open set in $H^n$ and $V$ is an open neighborhood of $x$ in $X$. The mapping $\phi$ is called a local parametrization around $x$ as usual. Tangent space $T_x(X)$ can be defined in the usual manner. Note that, $T_x(X)$ is an $n$-dimensional linear space, even when $X$ is not a usual manifold but a manifold with boundary.

Suppose that $X$ is an $n$-dimensional $C^k$ manifold with boundary, and define the set $\partial X$ as the set of all $x\in X$ such that there is no local parametrization around $x$ whose domain is an open set in $\mathbb{R}^n$. It can be shown that if $\partial X$ is nonempty, then it is an $(n-1)$-dimensional $C^k$ manifold, and $T_x(\partial X)$ is a subspace of $T_x(X)$.

Note that, $X\times Y$ is not necessarily a manifold with boundary, even if both $X,Y$ are manifolds with boundary. For example, it is known that $[0,1]^2$ is not a manifold with boundary. However, if $X$ is an $n$-dimensional $C^k$ manifold with boundary and $Y$ is an $m$-dimensional $C^k$ manifold without boundary, then $X\times Y$ is an $(n+m)$-dimensional $C^k$ manifold with boundary, and $\partial (X\times Y)=\partial X\times Y$ and $T_{(x,y)}(X\times Y)=T_x(X)\times T_y(Y)$. In particular, because the unit interval $[0,1]$ is a $1$-dimensional $C^{\infty}$ manifold with boundary, if $X$ is an $n$-dimensional $C^k$ manifold, then $Y=[0,1]\times X$ is an $(n+1)$-dimensional $C^k$ manifold with boundary, and $\partial Y=\{0,1\}\times X$. If we write $X_t=\{t\}\times X$, then $\partial Y=X_0\cup X_1$ in this sense.

Suppose that $X,Y$ are $C^k$ manifolds with boundary, and $f:X\to Y$ is $C^k$. Then, we can define the derivative $df_x:T_x(X)\to T_{f(x)}(Y)$ in the same manner as in Subsection 2.1. Usually, $\partial f$ denotes the restriction of $f$ into $\partial X$. If $Y$ has no boundary and $Z$ is a submanifold of $Y$, we can define transversality as usual: that is, $f$ is transversal to $Z$ if and only if for any $x\in f^{-1}(Z)$,
\[\mbox{Im}(df_x)+T_{f(x)}(Z)=T_{f(x)}(Y).\]
Then, the following fact holds.

\begin{fact}
Suppose that $X$ is a $C^k$ manifold with boundary, $Y$ is a $C^k$ manifold without boundary, and $Z$ is a submanifold of $Y$ without boundary. Let $f:X\to Y$ be a $C^k$ function such that both $f$ and $\partial f$ are transversal to $Z$. If $S=f^{-1}(Z)$ is nonempty, then $S$ is a $C^k$ manifold with boundary whose codimension is the same as that of $Z$, and $\partial S=\partial X\cap S$.
\end{fact}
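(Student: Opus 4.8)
The conclusion of this statement asserts three things --- that $S$ is a $C^k$ manifold with boundary, that its codimension in $X$ equals $\mbox{codim}~Z$, and that $\partial S=\partial X\cap S$ --- and all three are local properties, verifiable in a neighbourhood of each point of $S$. So the plan is to fix $x\in S$ and describe $S$ near $x$, distinguishing the cases $x\in X\setminus\partial X$ and $x\in\partial X$. If $x$ lies in the interior of $X$, a sufficiently small neighbourhood of $x$ in $X$ is a boundaryless $C^k$ manifold, the restriction of $f$ to it is transversal to $Z$, and Fact 1 shows that $S$ near $x$ is a boundaryless $C^k$ submanifold of codimension $\mbox{codim}~Z$; in particular such an $x$ is not a boundary point of $S$. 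So everything hinges on the case $x\in\partial X\cap S$.

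For such an $x$, write $y_0=f(x)$ and $\ell=\mbox{codim}~Z$. First I would replace $Z$ by a local defining submersion: since $Z$ is a boundaryless submanifold of $Y$, there is an open neighbourhood $W$ of $y_0$ in $Y$ and a $C^k$ map $\rho:W\to\mathbb{R}^\ell$ with $\mbox{Im}(d\rho_y)=\mathbb{R}^\ell$ for all $y\in W$ and $Z\cap W=\rho^{-1}(0)$. (One obtains $\rho$ by taking a local parametrization $\psi$ of $Y$ about $y_0$, noting that $\psi^{-1}(Z)$ is a submanifold of an open subset of $\mathbb{R}^m$, straightening it by a further $C^k$ diffeomorphism so that it becomes a coordinate subspace, and letting $\rho$ be the complementary $\ell$ coordinates composed with $\psi^{-1}$.) On the neighbourhood $f^{-1}(W)$ of $x$ we then have $S=f^{-1}(Z)=g^{-1}(0)$ for $g:=\rho\circ f$, and the chain rule together with the submersion property of $\rho$ turns the hypotheses ``$f$ transversal to $Z$'' and ``$\partial f$ transversal to $Z$'' into ``$0$ is a regular value of $g$ near $x$'' and ``$0$ is a regular value of $\partial g$ near $x$'', respectively.

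Next I would take a local parametrization $\phi:U\to V$ of $X$ about $x$ with $U$ relatively open in $H^n$ and $\phi(z)=x$, and extend $g\circ\phi$ to a $C^k$ map $G$ defined on an open subset of $\mathbb{R}^n$. Since $d\phi_z$ is injective with image $T_x(X)$ and $dg_x$ is surjective, $DG(z)$ is surjective, and after shrinking the domain of $G$ it is surjective at every point of $M:=G^{-1}(0)$; hence $M$ is a boundaryless $C^k$ submanifold of $\mathbb{R}^n$ of dimension $n-\ell$, and $M\cap H^n$ coincides with $\phi^{-1}(S)$ near $z$. The heart of the proof --- and, I expect, the main obstacle, being the one place where transversality of $\partial f$ is genuinely used --- is the claim that the last coordinate function $x_n$, restricted to $M$, has $0$ as a regular value. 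To see this, suppose instead that some $w\in M$ has $w_n=0$ and $T_wM\subset\mathbb{R}^{n-1}\times\{0\}$ (note $\phi(w)\in\partial X\cap S$). Then, since $T_wM=\ker DG(w)$ has dimension $n-\ell$, the restriction $DG(w)|_{\mathbb{R}^{n-1}\times\{0\}}$ has that same kernel, hence rank $\ell-1$, so it is not onto $\mathbb{R}^\ell$; but that restriction equals $d(\partial g)_{\phi(w)}\circ d(\partial\phi)_w$ with $d(\partial\phi)_w$ an isomorphism onto $T_{\phi(w)}(\partial X)$, so $d(\partial g)_{\phi(w)}$ is not onto either, contradicting that $0$ is a regular value of $\partial g$ near $x$. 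Granting the claim, the standard sublevel-set lemma for manifolds with boundary (Guillemin and Pollack (1974), Section 2.1) applied to $x_n|_M$ shows that $M\cap H^n$ is a $C^k$ manifold with boundary of dimension $n-\ell$ whose boundary is $M\cap\{x_n=0\}$. Transporting through the diffeomorphism $\phi$, $S$ near $x$ is a $C^k$ manifold with boundary of dimension $n-\ell$, hence of codimension $\ell=\mbox{codim}~Z$, with boundary equal to $S\cap\partial X$ near $x$; combining this with the interior case yields all three assertions. Everything apart from the regular-value claim for $x_n|_M$ is bookkeeping of the successive local reductions.
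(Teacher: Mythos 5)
Your proof is correct, and it is essentially the standard argument of Guillemin and Pollack (1974, Section 2.1), which is exactly what the paper relies on: the paper states Fact 2 without proof and defers to that source. In particular, your reduction to a local defining submersion $\rho$ for $Z$, the extension of $g\circ\phi$ to a map $G$ on an open subset of $\mathbb{R}^n$, and the key regular-value claim for $x_n|_M$ at points of $M\cap\{x_n=0\}$ (which is where transversality of $\partial f$ enters) all match the reference proof.
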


\subsection{Orientation}
Suppose that $V$ is an $n$-dimensional linear space. Choose any two bases $v=(v_1,...,v_n)$ and $w=(w_1,...,w_n)$. Then, there uniquely exists an $(n\times n)$ matrix $A$ such that
\[w_i=a_{i1}v_1+...+a_{in}v_n\]
for all $i\in \{1,...,n\}$. We say that $v$ is {\bf equivalent} to $w$ if and only if $\det A>0$.\footnote{We use both expressions $\det A$ and $|A|$ to denote the determinant of $A$.}

Suppose that $X\subset \mathbb{R}^N$ is an $n$-dimensional $C^k$ manifold with/without boundary. Let
\[T^n(X)=\{(x,v_1,...,v_n)|x\in X,\ v_i\in T_x(X)\}.\]
Choose any function $\mbox{sign}_X:T^n(X)\to \{+1,-1\}$. We say that $\mbox{sign}_X$ is {\bf sign-preserving} if for any two equivalent bases $v=(v_1,...,v_n)$ and $w=(w_1,...,w_n)$ of $T_x(X)$, $\mbox{sign}_X(x,v)=\mbox{sign}_X(x,w)$. Suppose that for every $x\in X$, there exists a local parametrization $\phi$ around $x$ such that for $v_i(z)=D\phi(z)e_i$, $\mbox{sign}_X(\phi(z),v_1(z),...,v_n(z))$ is constant, where $e_i\in \mathbb{R}^n$ is the $i$-th unit vector. Then, we call this sign-preserving function $\mbox{sign}_X$ an {\bf orientation} of $X$.

We must note an exception. Suppose that $X$ is a $0$-dimensional manifold. By definition, $X$ is a discrete set, and $T_x(X)=\{0\}$. In this case, any function $\mbox{sign}_X:X\to \{-1,+1\}$ is called an orientation of $X$.

Note that, there is a manifold with no orientation: the M\"obius strip is a typical example. If $X$ is a manifold and there is an orientation, then we say that $X$ is {\bf orientable}. If we think that an orientation of $X$ is given, then $X$ is said to be {\bf oriented}. It is known that if $X$ is a connected and orientable manifold, then there are exactly two possible orientations.

If $X,Y$ are manifolds and either $\partial X=\emptyset$ or $\partial Y=\emptyset$, then $X\times Y$ is also a manifold. If, in addition, both $X,Y$ are oriented, then $X\times Y$ is also oriented, and the orientation is made in the following manner. Suppose that $\mbox{sign}_X,\mbox{sign}_Y$ denote orientations. Then, there uniquely exists a sign-preserving function $\mbox{sign}_{X\times Y}$ such that
\begin{align*}
&~\mbox{sign}_{X\times Y}((x,y),(v_1,0),...,(v_n,0),(0,w_1),...,(0,w_m))\\
=&~\mbox{sign}_X(x,v_1,...,v_n)\mbox{sign}_Y(y,w_1,...,w_m)
\end{align*}
for all $(x,y)\in X\times Y$ and bases $v=(v_1,...,v_n)$ of $T_x(X)$ and $w=(w_1,...,w_m)$ of $T_y(Y)$. We can easily check that this function $\mbox{sign}_{X\times Y}$ is an orientation of $X\times Y$, and thus, $X\times Y$ is oriented by this orientation.

Suppose that $X\subset \mathbb{R}^N$ is an $n$-dimensional $C^k$ oriented manifold, and $Y\subset X$ is an $(n-1)$-dimensional submanifold. Choose any orientation $\mbox{sign}_X$ of $X$. Suppose also that there exists a $C^{k-1}$ function $n:Y\to \mathbb{R}^N$ such that $n(x)\in T_x(X)$, $\|n(x)\|=1$, and $n(x)\cdot v=0$ for all $v\in T_x(Y)$. Then, there exists an orientation $\mbox{sign}_Y$ of $Y$ such that
\[\mbox{sign}_Y(x,v_1,...,v_{n-1})=\mbox{sign}_X(x,n(x),v_1,...,v_{n-1}).\]
Hence, $Y$ is an oriented manifold.

Next, suppose that $X\subset \mathbb{R}^N$ is a $C^k$ manifold with boundary. Choose any $x\in \partial X$. Then, there exists $p\in T_x(X)$ such that $\|p\|=1$ and $p\cdot v=0$ for all $v\in T_x(\partial X)$. Choose any local parametrization $\phi$ around $x$, and suppose that $p=D\phi_x(q)$. Then, $q_n\neq 0$. Define $n(x)=p$ if $q_n<0$ and $n(x)=-p$ if $q_n>0$. This function $n:\partial X\to \mathbb{R}^N$ is called the {\bf outward normal vector field} of $\partial X$. We can show that $n(x)$ is $C^{k-1}$, and thus, we can conclude that $\partial X$ is oriented by the following sign-preserving function:
\[\mbox{sign}_{\partial X}(x,v_1,...,v_{n-1})=\mbox{sign}_X(x,n(x),v_1,...,v_{n-1}).\]
Suppose that $X$ is a $C^k$ oriented manifold with boundary, $Y$ is a $C^k$ oriented manifold without boundary, $Z$ is a $C^k$ oriented submanifold of $Y$ without boundary, $f:X\to Y$ is a $C^k$ function such that both $f$ and $\partial f$ are transversal to $Z$, and $S=f^{-1}(Z)$ is nonempty. Then, $S$ is a manifold with boundary. Let $\ell=\mbox{codim}~Z$. Choose any $x\in f^{-1}(Z)$ and suppose that $H$ is an $\ell$-dimensional linear subspace of $T_x(X)$ such that
\[df_x(H)\oplus T_{f(x)}(Z)=T_{f(x)}(Y).\]
Then, we can show that
\[H\oplus T_x(S)=T_x(X).\]
Choose any basis $(v_1,...,v_{\ell})$ of $H$, and define $w_i=df_x(v_i)$. Choose any basis $(w_{\ell+1},...,w_m)$ of $T_{f(x)}(Z)$ such that $\mbox{sign}_Z(f(x),w_{\ell+1},...,w_m)=+1$. Define
\[h(v_1,...,v_{\ell})=\mbox{sign}_Y(f(x),w_1,...,w_m),\]
and for a basis $v_{\ell+1},...,v_n$ of $T_x(S)$,
\[\mbox{sign}_S(x,v_{\ell+1},...,v_n)=\mbox{sign}_X(x,v_1,...,v_n)h(v_1,...,v_{\ell}).\]
Then, we can show that $\mbox{sign}_S$ is an orientation. We call this orientation the {\bf preimage orientation}.

Finally, suppose that $X\subset \mathbb{R}^N$ is a $C^{\infty}$ oriented manifold. Consider the tangent bundle $T(X)$. Then, $T_{(x,v)}(T(X))=(T_x(X))^2$, and it is easy to show that there exists an orientation $\mbox{sign}_{T(X)}$ such that
\[\mbox{sign}_{T(X)}((x,v),(v_1,0),...,(v_n,0),(0,w_1),...,(0,w_n))=\mbox{sign}_X(x,v_1,...,v_n)\mbox{sign}_X(x,w_1,...,w_n).\]
Using this orientation, we can treat $T(X)$ as an oriented manifold.

\subsection{Classification of One-dimensional Manifolds}
Let
\[S^1=\{x\in \mathbb{R}^2|\|x\|=1\}.\]
Suppose that $X$ is a compact and connected $1$-dimensional $C^k$ manifold with/without boundary. Then, it is known that $X$ is $C^k$ diffeomorphic to either $S^1$ or $[0,1]$. In Guillemin and Pollack (1974), this classification theorem was proved using a Morse function. However, Morse function can be defined only if $k\ge 2$, and thus their proof cannot be applied if $k=1$. And actually, we need this result for $k=1$. Hence, we prove this theorem in the appendix.

It is known that every connected component of a manifold is open. Suppose that $X$ is a compact $1$-dimensional $C^k$ manifold with/without boundary. Because $X$ is compact, the number of connected components of $X$ is finite. Because each connected component $X_i$ is diffeomorphic to either $S^1$ or $[0,1]$, either $\partial X_i$ is empty or it contains $2$ points. In particular, $\partial X$ is a finite set that includes even number of points.

Moreover, suppose that $X$ is oriented. Then, for each connected component $X_i$ that is diffeomorphic to $[0,1]$, if $\partial X_i=\{a,b\}$, then $\mbox{sign}_{\partial X}(a)+\mbox{sign}_{\partial X}(b)=0$. This implies that
\[\sum_{x\in \partial X}\mbox{sign}_{\partial X}(x)=0.\]
This result is crucial for our main result.

\section{Result}
\subsection{Statement of the Result}
In this section, we fix $n\ge 2$, and let $A^T$ denote the transpose of the matrix $A$. We define $\mathbb{R}^n_+=\{x\in \mathbb{R}^n|x_i\ge 0\mbox{ for all }i\}$ and $\mathbb{R}^n_{++}=\{x\in \mathbb{R}^n|x_i>0\mbox{ for all }i\}$. Let $S=\{p\in \mathbb{R}^n_{++}|\|p\|=1\}$. It is known that $S$ is an $(n-1)$-dimensional $C^{\infty}$ manifold. Moreover, because $h(p)=p$ is a natural normal vector field on $S$, we can assume that $S$ is oriented.

Our main result is as follows.

\begin{theorem}
Let $f:\mathbb{R}^n_{++}\to \mathbb{R}^n$ be a continuous mapping such that the following five conditions hold.
\begin{enumerate}[{\rm (i)}]
\item $p\cdot f(p)=0$ for all $p\in S$, and $f(ap)=f(p)$ for all $a>0$ and $p\in S$.

\item There exists $s>0$ such that $f_i(p)>-s$ for all $p\in S$ and $i\in \{1,...,n\}$.

\item If $(p^k)$ is a sequence on $S$ such that $p^k\to p\notin S$ as $k\to \infty$, then $\|f(p^k)\|\to \infty$ as $k\to \infty$.

\item If $f(p^*)=0$, then $f$ is continuously differentiable around $p^*$.

\item If $f(p^*)=0$, then 
\[g(p^*)=\begin{vmatrix}
Df(p^*) & p^*\\
(p^*)^T & 0
\end{vmatrix}\neq 0.
\]
\end{enumerate}
Let $E^*=\{p\in S|f(p)=0\}$, and for each $p^*\in E^*$, define
\[\mathrm{index}(p^*)=\begin{cases}
+1 & \mbox{if }(-1)^ng(p^*)>0,\\
-1 & \mbox{if }(-1)^ng(p^*)<0.
\end{cases}\]
Then, $E^*$ is a finite set and
\[\sum_{p^*\in E^*}\mathrm{index}(p^*)=+1.\]
\end{theorem}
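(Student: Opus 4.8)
The plan is to realize the index sum as a degree-type invariant computed via a homotopy argument on a compact one-dimensional manifold with boundary, then invoke the classification result of Subsection~2.6 to conclude that the total signed count of boundary points is zero, which translates into the desired identity $\sum_{p^*\in E^*}\mathrm{index}(p^*)=+1$.

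\textbf{Setup.} First I would use (i) to regard $f$ as a tangent vector field on $S$: by $p\cdot f(p)=0$ we have $f(p)\in T_p(S)$ for $p\in S$, and the homogeneity lets us work on the sphere $S$ itself. Then $E^*=f^{-1}(0)$, and condition (v) says precisely that at each zero the derivative of the vector field (as a section of $T(S)$) is nonsingular — equivalently, the section $p\mapsto (p,f(p))$ of the tangent bundle $T(S)$ is transversal to the zero section $S_0=\{(p,0)\}$ at each point of $E^*$, with local intersection sign $\pm1$ governed by the sign of the bordered determinant $g(p^*)$. (The bordered determinant is the standard device for computing the derivative of $f$ as a map into $T_p(S)$ after trivializing against the normal $p$; I would verify that its sign equals $(-1)^n$ times the sign of the determinant of $df_{p^*}$ restricted to $T_{p^*}(S)$, which is why $\mathrm{index}(p^*)$ is defined with the factor $(-1)^n$.) Since $E^*$ is a transversal preimage of a point-like condition and $S$ is locally compact, $E^*$ is discrete; I would then need a properness/boundedness argument using (ii) and (iii) — condition (iii) forces zeros to stay away from $\partial S$, and (ii) plus (i) give an a priori bound keeping $E^*$ inside a compact subset of $S$ — so $E^*$ is finite.

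\textbf{The homotopy.} The core is to build an explicit vector field $f_0$ on $S$ with exactly one zero, at which the index is $+1$, and connect $f$ to $f_0$ by a homotopy $F:[0,1]\times S\to\mathbb{R}^n$ with $F(t,\cdot)$ always tangent, always pointing ``inward'' near the missing faces (so that no zeros escape to the boundary), and transversal to $0$ as a map into the tangent bundle. The natural choice is $f_0(p)= \bar p - (\bar p\cdot p)p$ for a fixed $\bar p\in S$ (the tangential projection of a barycenter-type point), whose unique zero is $\bar p$ with index $+1$; the straight-line homotopy $F(t,p)=(1-t)f_0(p)+tf(p)$, projected tangentially, keeps property (ii)-(iii) usable because both endpoints satisfy the boundary behavior and $f_0$ is bounded. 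Applying a transversality/Sard perturbation (Fact~2 of Subsection~2.4 together with the $C^\infty$ Sard theorem invoked on the tangent bundle $T(S)$, which is a legitimate $C^\infty$ manifold here since $S$ is $C^\infty$) I may assume $F$ itself, viewed as the section $(t,p)\mapsto((t,p),F(t,p))$ of the pullback tangent bundle, is transversal to the zero section; then $\Gamma=F^{-1}(0)$ is a compact $1$-dimensional $C^1$ manifold with boundary, with $\partial\Gamma\subset(\{0\}\times S)\cup(\{1\}\times S)$, i.e. $\partial\Gamma=(\{0\}\times\{\bar p\})\cup(\{1\}\times E^*)$. Here compactness is where (ii)–(iii) do their real work: they must be upgraded to a uniform statement guaranteeing $\Gamma$ is contained in $[0,1]\times K$ for a compact $K\subset S$.

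\textbf{Conclusion.} With the preimage orientation from Subsection~2.5 on $\Gamma$, the signed boundary count vanishes: $\sum_{x\in\partial\Gamma}\mathrm{sign}_{\partial\Gamma}(x)=0$ by the classification theorem for compact oriented $1$-manifolds (Subsection~2.6). The boundary contributions split as the contribution at $t=0$ (a single point $\bar p$, with sign $-1$ from the outward-normal convention on $\{0\}\times S$ and local intersection sign $+1$, hence $-1$) plus the contributions at $t=1$ (the points of $E^*$, with sign $+1$ from the outward normal on $\{1\}\times S$ times the local intersection sign at each $p^*$, which one checks equals $\mathrm{index}(p^*)$). Setting the total to zero gives $-1+\sum_{p^*\in E^*}\mathrm{index}(p^*)=0$, i.e. the claim. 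I expect the main obstacle to be the compactness/no-escape-to-boundary step: condition (iii) handles a single vector field, but making it yield a uniform bound along the whole homotopy $F$ — so that $\Gamma$ is genuinely compact and its boundary is confined to $t\in\{0,1\}$ — requires care, likely by choosing $f_0$ and the homotopy so that a common ``boundary-repelling'' estimate (using (ii) for the lower bound and (iii) for blow-up near $\partial S$) holds for every $F(t,\cdot)$ simultaneously. A secondary technical point is justifying the bordered-determinant sign computation and the identification of the tangent-bundle intersection sign with $\mathrm{index}(p^*)$, which is bookkeeping with orientations but must be done explicitly since the whole theorem is a sign-tracking statement.
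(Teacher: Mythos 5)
Your overall strategy coincides with the paper's: homotope $f$ to a reference tangent field with a unique zero of index $+1$, use transversality to make the zero set of the homotopy a compact oriented one-dimensional $C^1$ manifold with boundary, and read the index identity off the vanishing of the signed boundary count supplied by the classification theorem. You also correctly isolate two of the delicate points (the uniform no-escape estimate along the whole homotopy, which is the paper's Step 6, and the bordered-determinant orientation bookkeeping, which is the paper's Steps 3, 4 and 9).

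There is, however, a genuine gap you never address: regularity of $f$. Hypothesis (iv) gives continuous differentiability only near the points of $E^*$; elsewhere $f$ is merely continuous. For $0<t<1$ the zero set of your straight-line homotopy $F(t,p)=(1-t)f_0(p)+tf(p)$ consists of points where $f(p)=-\frac{1-t}{t}f_0(p)\neq 0$, i.e.\ precisely points at which $f$ need not be differentiable. Hence neither the preimage theorem nor any transversality statement applies to $F$ along the interior of $\Gamma$, and $\Gamma=F^{-1}(0)$ cannot be shown to be a manifold as you claim. The paper repairs this by mollifying $f$ away from $E^*$ (convolution $f_\delta$, tangential projection $\hat f_\delta$, and a cutoff gluing back to $f$ near each $p^*$) to obtain a $C^1$ field $\tilde f_\delta$ with the same zero set, the same germ at each zero, and the same boundary-repelling property; the homotopy is then run with $\tilde f_\delta$. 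A secondary under-specification: you write that ``applying a transversality/Sard perturbation'' you may assume $F$ is transversal to the zero section, but a generic perturbation of $F$ would destroy the boundary data ($\partial\Gamma$ must remain exactly $(\{0\}\times\{\bar p\})\cup(\{1\}\times E^*)$), and Sard cannot be applied to the merely-$C^1$ map $F$ itself. The paper avoids both problems by parametrizing the reference field by its zero $q$, verifying surjectivity of the derivative of the parametrized map in the $q$-direction for $t\neq 0$ and using condition (v) at $t=0$, and then selecting $q$ via Sard applied to the $C^\infty$ projection onto the parameter space.
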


\vspace{12pt}
This result is actually the same as Proposition 5.6.1 of Mas-Colell (1985). However, the proof of this proposition includes significant gaps, and thus many readers cannot judge whether this proposition is correct. Hence, we provide a rigorous and gap-free proof for this theorem. This is the purpose of this paper.

\subsection{The Sketch of the Proof}
Because the proof of Theorem 1 is lengthy, providing a sketch of the proof may help the readers understand it. The most famous proof of such an index theorem is to apply the Poincar\'e-Hopf theorem. However, this theorem cannot be used for the present proof because the set $S$ is not compact. Fortunately, using assumption (iii), we can easily prove that zeroes of $f$ vanish around the corner of $S$. Thus, if we define a slightly smaller set $S_{\varepsilon}=\{p\in S|p_i\ge \varepsilon\}$, then this set will be compact, and all zeroes of $f$ are included in this set. However, this set is not a manifold. On the other hand, the interior $\tilde{S}_{\varepsilon}=\{p\in S|p_i>\varepsilon\}$ is a manifold. However, this set is not compact.

Therefore, it is difficult to use the Poincar\'e-Hopf theorem to prove Theorem 1. Hence, we consider using the homotopy invariance of the sum of indices. Specifically, the following function
\[f^q(p)=(1/q\cdot p)q-p\]
is a vector field on $S$, and there is a unique zero $q$ with the index $+1$. Define
\[F(t,p)=(1-t)f(p)+tf^q(p).\]
We want to prove the invariance of the sum of indices with respect to the homotopy defined above. The most important step in this proof is to verify that $F^{-1}(0)$ is a compact one-dimensional manifold.

Readers familiar with the discussion of transversality will notice that if the domain of $F$ is a manifold and $f$ is smooth, then by choosing $q$ appropriately, $F^{-1}(0)$ is a one-dimensional manifold. Therefore, if the domain of $F$ is $[0,1]\times \tilde{S}_{\varepsilon}$, then $X\equiv F^{-1}(0)$ is a one-dimensional manifold. However, because $\tilde{S}_{\varepsilon}$ is not compact, $X$ is not necessarily compact. On the other hand, if the domain of $F$ is $[0,1]\times S_{\varepsilon}$, then the domain is compact, and thus $Y\equiv F^{-1}(0)$ is compact. However, in this case, it is doubtful whether $Y$ is a manifold, because the domain of $F$ is not a manifold.

Hence, we need to prove that $F(t,p)\neq 0$ for any $t\in [0,1]$ and $p\in S_{\varepsilon}\setminus \tilde{S}_{\varepsilon}$. If we show this, then we have that $X=Y$. Because $Y$ is compact and $X$ is a one-dimensional manifold, it is a one-dimensional compact manifold. Using this fact and well-known results for computing indices, we can obtain the desired result. This is a sketch of our proof.

\subsection{Proof of Theorem 1}
We separate the proof into nine steps.

\begin{step}
The set $E^*$ is finite.
\end{step}

\begin{proof}[{\bf Proof of Step 1}]
Choose any $p^*\in E^*$, and let $A=Df(p^*)$ and $V=\{v\in \mathbb{R}^n|v\cdot p^*=0\}$. If $v\in V,\ v\neq 0$ and $Av=0$, then
\[\begin{pmatrix}
Df(p^*) & p^*\\
(p^*)^T & 0
\end{pmatrix}\begin{pmatrix}
v\\
0
\end{pmatrix}=0,
\]
which contradicts (v). Therefore, we have that $v\in V$ and $v\neq 0$ imply $Av\neq 0$. Hence,
\[m=\min\{\|Av\|/\|v\||v\in V,\ v\neq 0\}>0.\]
Because of the definition of the Fr\'echet derivative, there exists $\varepsilon>0$ such that if $v\in V$ and $0<\|v\|<\varepsilon$, then
\[\|f(p^*+v)-f(p^*)-Av\|<\frac{\|v\|m}{2}.\]
Because $f(p^*)=0$,
\begin{align*}
\|v\|m\le \|Av\|\le&~\|f(p^*+v)-f(p^*)-Av\|+\|f(p^*+v)\|\\
<&~\frac{\|v\|m}{2}+\|f(p^*+v)\|,
\end{align*}
which implies that $f(p^*+v)\neq 0$. Now, for any $q\in (p^*+V)\cap \mathbb{R}^n_{++}$, define $P(q)=q/\|q\|$. Then, $P$ is a $C^{\infty}$ mapping from $(p^*+V)\cap \mathbb{R}^n_{++}$ into $S$, and it is easy to show that $dP_{p^*}$ is the identity mapping. By the inverse function theorem, there exists $\delta>0$ such that $P^{-1}$ is defined, injective on $\{p\in S|\|p-p^*\|<\delta\}$, and if $p\in S$ and $\|p-p^*\|<\delta$, then $\|P^{-1}(p)-p^*\|<\varepsilon$. By (i), $p\in S, 0<\|p-p^*\|<\delta$ implies that $f(p)\neq 0$, and thus $E^*$ is a discrete set.

By (iii), there exists $\varepsilon>0$ such that if $p\in S$ and $p_i\ge \varepsilon$ for all $i\in \{1,...,n\}$, then $f(p)\neq 0$. This implies that $E^*$ is compact, and thus finite. This completes the proof of Step 1.
\end{proof}

\begin{step}
Let $(p^k)$ be a sequence on $S$ such that $p^k$ converges to $p\notin S$ and $\frac{1}{\|f(p^k)\|}f(p^k)$ converges to $z\in \mathbb{R}^n$ as $k\to \infty$. Then, $z\ge 0$, $z\neq 0$ and $p\cdot z=0$.
\end{step}

\begin{proof}[{\bf Proof of Step 2}]
By (ii), $f_i(p^k)\to -\infty$ is impossible, and thus by (iii), we have that $z_i\ge 0$. Because $\|z\|=1$, $z\neq 0$. Finally, suppose that $p_i>0$. By (i),
\[f_i(p^k)=-\frac{1}{p^k_i}\left(\sum_{j\neq i}p^k_jf_j(p^k)\right).\]
Because the right-hand side is bounded from above, we have that the sequence $(f_i(p^k))$ is bounded, and thus $z_i=0$, which implies that $p\cdot z=0$. This completes the proof of Step 2.
\end{proof}

\begin{step}
Choose any $q\neq 0$. Suppose that $A$ is an $n\times n$ matrix such that $v^TAv<0$ for all $v\in \mathbb{R}^n$ such that $v\cdot q=0$ and $v\neq 0$. Then,
\[(-1)^n\begin{vmatrix}
A & q\\
q^T & 0
\end{vmatrix}>0.\]
\end{step}

\begin{proof}[{\bf Proof of Step 3}]
Let $A_t=(1-t)A+tA^T$. By assumption, we have that $v^TA_tv<0$ for all $t\in [0,1]$ and $v\in\mathbb{R}^n$ such that $v\cdot q=0$ and $v\neq 0$. Suppose that there exists $t\in [0,1]$ such that
\[\begin{vmatrix}
A_t & q\\
q^T & 0
\end{vmatrix}=0,\]
then, there exist $v\in \mathbb{R}^n$ and $w\in \mathbb{R}$ such that $(v,w)\neq 0$ and
\[\begin{pmatrix}
A_t & q\\
q^T & 0
\end{pmatrix}
\begin{pmatrix}
v\\
w
\end{pmatrix}=0.\]
This implies that $q^Tv=0$, and thus $v\cdot q=0$. Therefore,
\[0=v^T0=v^TA_tv+v^Tqw=v^TA_tv,\]
which implies that $v=0$. However, this implies that $wq=0$, and thus $w=0$, which is a contradiction. By the intermediate value theorem, we have that
\[c(t)\equiv \begin{vmatrix}
A_t & q\\
q^T & 0
\end{vmatrix}\]
has a constant sign on $[0,1]$. If we set $t=\frac{1}{2}$, then $A_t$ is symmetric, and thus by Theorem 5 of Debreu (1952), $(-1)^nc(1/2)>0$. Therefore, $(-1)^nc(0)>0$, as desired. This completes the proof of Step 3.
\end{proof}

\begin{step}
Suppose that $p\in \mathbb{R}^n$ and $\|p\|=1$, and $A$ is an $n\times n$ matrix such that $Ap=0$ and
\[\begin{vmatrix}
A & p\\
p^T & 0
\end{vmatrix}\neq 0.\]
Let $x_1,...,x_{n-1}\in \mathbb{R}^n$ satisfy $p\cdot x_i=0$ and $\det(p,x_1,...,x_{n-1})>0$, and define $y_i=Ax_i$. Then,
\[\mathrm{sign}\begin{vmatrix}
A & p\\
p^T & 0
\end{vmatrix}=-\mathrm{sign}(\det(p,y_1,...,y_{n-1})).\]
\end{step}

\begin{proof}[{\bf Proof of Step 4}]
Let
\[B=\begin{pmatrix}
A & p \\
p^T & 0
\end{pmatrix}.\]
Then,
\[B\begin{pmatrix}
x_i\\
0
\end{pmatrix}=\begin{pmatrix}y_i\\
0
\end{pmatrix},\ B\begin{pmatrix}
p\\
0
\end{pmatrix}=\begin{pmatrix}
0\\
1
\end{pmatrix},\ B\begin{pmatrix}
0\\
1
\end{pmatrix}=\begin{pmatrix}
p\\
0
\end{pmatrix}.\]
Define
\[C=\begin{pmatrix}
0 & x_{11} & ... & x_{n-1,1} & p_1 \\
\vdots & \vdots & \ddots & \vdots & \vdots \\
0 & x_{1n} & ... & x_{n-1,n} & p_n \\
1 & 0 & ... & 0 & 0
\end{pmatrix}.\]
Then, $\det(C)<0$ and
\[BC=\begin{pmatrix}
p_1 & y_{11} & ... & y_{n-1,1} & 0 \\
\vdots & \vdots & \ddots & \vdots & \vdots \\
p_n & y_{1n} & ... & y_{n-1,n} & 0 \\
0 & 0 &... & 0 & 1
\end{pmatrix},\]
and thus
\[\det(B)=\det(p,y_1,...,y_{n-1})(\det(C))^{-1},\]
which implies that our claim is correct. This completes the proof of Step 4.
\end{proof}

Choose any $q\in S$, and define $f^q(p)=\frac{\|p\|}{p\cdot q}q-\frac{1}{\|p\|}p$. Then, $f^q$ is defined on $\mathbb{R}^n_+\setminus\{0\}$.

\begin{step}
$f^q$ satisfies $(\mathrm{i})$, $(\mathrm{ii})$, $(\mathrm{iv})$, $(\mathrm{v})$ of this theorem.
\end{step}

\begin{proof}[{\bf Proof of Step 5}]
Clearly, $p\cdot f^q(p)=0$ and $f^q(p)=f^q(ap)$ for all $a>0$, and thus (i) holds. Let $\bar{S}$ be the closure of $S$ in $\mathbb{R}^n$. Then, $f^q$ is defined on $\bar{S}$ and continuous, which implies that (ii) holds. Clearly, $f^q$ is $C^{\infty}$ and thus (iv) holds. It is obvious that $f^q(p)=0$ if and only if $p=aq$ for some $a>0$. Let $v\cdot q=0$ and $v\neq 0$, and define $p(t)=q+tv$. Then, $p(t)\cdot q=\|q\|^2=1$, and thus
\[\frac{d}{dt}f^q(p(t))=\frac{t\|v\|^2}{\|p(t)\|}q+\frac{t\|v\|^2}{2\|p(t)\|^3}p(t)-\frac{1}{\|p(t)\|}v,\]
which implies that
\[v^TDf^q(q)v=v\cdot \left(\left.\frac{d}{dt}f^q(p(t))\right|_{t=0}\right)=-\|v\|^2<0.\]
Therefore, $v\cdot q=0$ and $v\neq 0$ imply that $v^TDf^q(q)v<0$, and by Step 3,
\begin{equation}\label{EVAL}
(-1)^n\begin{vmatrix}
Df^q(q) & q\\
q^T & 0
\end{vmatrix}>0.
\end{equation}
Because $f^q(ap)=f^q(p)$ for all $a>0$, we have that $Df^q(aq)=a^{-1}Df^q(q)$, and thus (v) holds. This completes the proof of Step 5.
\end{proof}

Define
\[S_{\varepsilon}=\{p\in S|p_i\ge \varepsilon \mbox{ for all }i\in \{1,...,n\}\},\]
\[\tilde{S}_{\varepsilon}=\{p\in S|p_i>\varepsilon \mbox{ for all }i\in \{1,...,n\}\},\]
\[\partial S_{\varepsilon}=S_{\varepsilon}\setminus \tilde{S}_{\varepsilon}.\]
Note that, $S_{\varepsilon}$ and $\partial S_{\varepsilon}$ are not manifolds. Choose any $\varepsilon'>0$ such that $\tilde{S}_{\varepsilon'}\neq \emptyset$.

\begin{step}
If $\varepsilon>0$ is sufficiently small, then $\tilde{S}_{\varepsilon}\neq \emptyset$ and
\[(1-t)f(p)+tf^q(p)\neq 0\]
for all $t\in [0,1]$, $q\in S_{\varepsilon'}$, and $p\in \partial S_{\varepsilon}$.
\end{step}

\begin{proof}[{\bf Proof of Step 6}]
Suppose not. Then, there exist sequences $(t^k)$ on $[0,1]$, $(p^k)$ on $S$, and $(q^k)$ on $S_{\varepsilon'}$ such that $p^k\to p\notin S$ as $k\to \infty$ and
\[(1-t^k)f(p^k)+t^kf^{q^k}(p^k)=0\]
for all $k$. Let $z^k=\frac{1}{\|f(p^k)\|}f(p^k)$. By taking subsequences, we can assume that $z^k\to z$ as $k\to \infty$. By Step 2, $z\ge 0$, $z\neq 0$ and $p\cdot z=0$. Choose any $i$ such that $z_i>0$. Then, $p_i=0$, and $\min_{q\in S_{\varepsilon'}}f^q_i(p)>0$. This implies that $\min_{q\in S_{\varepsilon'}}f^q_i(p^k)>0$ for sufficiently large $k$. Moreover, because $z_i>0$, $f_i(p^k)>0$ for sufficiently large $k$. Thus, for such a $k$,
\[(1-t^k)f_i(p^k)+t^kf_i^{q^k}(p^k)>0,\]
which is a contradiction. This completes the proof of Step 6.
\end{proof}

Hereafter, we fix $\varepsilon>0$ that satisfies the requirements in Step 6, $E^*\subset S_{\varepsilon}$, and $\varepsilon<\varepsilon'$.

Choose any $\eta>0$ such that if $p^*\in E^*$, then $f$ is $C^1$ on $\{p\in \mathbb{R}^n_{++}|\|p-p^*\|<4\eta\}$, and $\{p\in \mathbb{R}^n_{++}|f(p)=0,\ \|p-p^*\|<4\eta\}\subset\{ap^*|a>0\}$. By Step 1, $E^*$ is finite, and thus such an $\eta$ exists. Define
\[\hat{S}_{\varepsilon,a}=\{p\in S_{\varepsilon}|\forall p^*\in E^*,\ \|p-p^*\|\ge a\}.\]
Then, $\hat{S}_{\varepsilon,a}$ is compact. We assume without loss of generality that $\hat{S}_{\varepsilon,2\eta}$ is nonempty. Define
\[\varphi(p)=\begin{cases}
Ce^{-\frac{1}{1-\|p\|^2}} & \mbox{if }\|p\|<1,\\
0 & \mbox{if }\|p\|\ge 1,
\end{cases}\]
where $C>0$ is chosen as
\[\int_{\mathbb{R}^n}\varphi(p)dp=1.\]
For $\delta>0$, define
\[\varphi_{\delta}(p)=\delta^n\varphi(p/\delta),\]
and
\[f_{\delta}(p)=\int_{\mathbb{R}^n} f(p-q)\varphi_{\delta}(q)dq.\]
It is known that $f_{\delta}$ uniformly converges to $f$ on the compact set $S_{\varepsilon}$ as $\delta\downarrow 0$. Moreover, $f_{\delta}$ is $C^{\infty}$.\footnote{See Appendix C.5 of Evans (2010).} Choose any $C^{\infty}$ function $\xi:\mathbb{R}\to [0,1]$ such that $\xi^{-1}(1)=]-\infty,1]$ and $\xi^{-1}(0)=[2,+\infty[$, and for any $p^*\in E^*$, let
\[t_{p^*}(p)=\xi(\eta^{-1}\|p-p^*\|).\]
For every $p\in S_{\varepsilon}$, define\footnote{If necessary, we can define $\tilde{f}_{\delta}(p)=\tilde{f}_{\delta}(p/\|p\|)$, and extend the domain of $\tilde{f}_{\delta}$ to a cone generated by $S_{\varepsilon}$, which includes an open neighborhood of $E^*$.}
\[\hat{f}_{\delta}(p)=f_{\delta}(p)-(f_{\delta}(p)\cdot p)p,\]
\[\tilde{f}_{\delta}(p)=\left(1-\sum_{p^*\in E^*}t_{p^*}(p)\right)\hat{f}_{\delta}(p)+\sum_{p^*\in E^*}t_{p^*}(p)f(p).\]
Because $f_{\delta}$ converges to $f$ uniformly on $S_{\varepsilon}$ as $\delta\downarrow 0$, there exists $\delta>0$ such that
\begin{enumerate}[1)]
\item For any $p\in S_{\varepsilon}$, $\tilde{f}_{\delta}(p)=0$ if and only if $p\in E^*$.

\item If $\tilde{f}_{\delta}(p)=0$, then $\tilde{f}_{\delta}=f$ on some neighborhood of $p$.

\item $p\cdot \tilde{f}_{\delta}(p)=0$ for all $p\in S_{\varepsilon}$.

\item For all $p\in \partial S_{\varepsilon}$, $q\in S_{\varepsilon'}$, and $t\in [0,1]$, $(1-t)\tilde{f}_{\delta}(p)+tf^q(p)\neq 0$.

\item $\tilde{f}_{\delta}$ is $C^1$ on $S_{\varepsilon}$.
\end{enumerate}
Clearly, if $\tilde{f}_{\delta}$ satisfies the claim of this theorem, then $f$ also satisfies the same claim. Hence, hereafter we implicitly use $\tilde{f}_{\delta}$ instead of $f$, and thus assume that $f$ is $C^1$ on $S_{\varepsilon}$.

Recall the definition of the tangent bundle:
\[T(\tilde{S}_{\varepsilon})=\{(p,v)|p\in \tilde{S}_{\varepsilon},\ v\in T_p(\tilde{S}_{\varepsilon})\}.\]
Define a function $F:[0,1]\times S_{\varepsilon}\times \tilde{S}_{\varepsilon'}\to T(\tilde{S}_{\varepsilon})$ as
\[F(t,p,q)=(p,(1-t)f(p)+tf^q(p)),\]
and $\tilde{F}$ as the restriction of $F$ into $[0,1]\times \tilde{S}_{\varepsilon}\times \tilde{S}_{\varepsilon'}$. Moreover, we define
\[G_q(t,p)=F(t,p,q),\ \tilde{G}_q(t,p)=\tilde{F}(t,p,q).\]
Finally, for any $(t,p,q)\in [0,1]\times \tilde{S}_{\varepsilon}\times \tilde{S}_{\varepsilon'}$, define $\pi(t,p,q)=q$. Because $[0,1]\times \tilde{S}_{\varepsilon}\times \tilde{S}_{\varepsilon'}$ is a $C^{\infty}$ manifold with boundary and $\pi$ is $C^{\infty}$, we can apply Sard's theorem to $\pi$ and $\partial \pi$, and thus there exists $q\in \tilde{S}_{\varepsilon'}$ such that it is a regular value of $\pi$ and $\partial \pi$.

\begin{step}
If $q$ is a regular value of both $\pi$ and $\partial \pi$, then both $\tilde{G}_q$ and $\partial \tilde{G}_q$ are transversal to $\tilde{S}_{\varepsilon}\times \{0\}$.
\end{step}

\begin{proof}[{\bf Proof of Step 7}]
First, we verify that $\tilde{F}$ and $\partial \tilde{F}$ are transversal to $\tilde{S}_{\varepsilon}\times \{0\}$. Note that
\[T_p(\tilde{S}_{\varepsilon})=\{v\in \mathbb{R}^n|v\cdot p=0\},\]
\[T_{(p,0)}(\tilde{S}_{\varepsilon}\times \{0\})=T_p(\tilde{S}_{\varepsilon})\times \{0\},\]
\[T_{(p,0)}(T(\tilde{S}_{\varepsilon}))=T_p(\tilde{S}_{\varepsilon})\times T_p(\tilde{S}_{\varepsilon}).\]
Suppose that $\tilde{F}(t,p,q)=(p,0)$. Let $c:]-\varepsilon'',\varepsilon''[\to \tilde{S}_{\varepsilon}\times \tilde{S}_{\varepsilon'}$ be a $C^1$ function such that $c(0)=(p,q)$ and $c'(0)=(v,w)$, and define $d(s)=\tilde{F}(t,c(s))$. Then, $v\cdot p=w\cdot q=0$, and thus
\[d'(0)=(v,(1-t)df_p(v)+tdf_p^q(v)+(t/p\cdot q)[w-(w\cdot p/p\cdot q)q]).\]
Because the mapping $w\mapsto w-(w\cdot p/p\cdot q)q$ is a bijection from $T_q(\tilde{S}_{\varepsilon'})$ into $T_p(\tilde{S}_{\varepsilon})$, both $d\tilde{F}_{(t,p,q)}$ and $d(\partial \tilde{F})_{(t,p,q)}$ are surjective whenever $t\neq 0$. Suppose that $t=0$. By the same argument as in the proof of Step 1, we have that $df_p(v)=0$ if and only if $v=0$, and thus for every $w\in T_p(\tilde{S}_{\varepsilon})$, there exists $v\in T_p(\tilde{S}_{\varepsilon})$ such that $df_p(v)=w$. Therefore, in this case,
\[\mbox{Im}(d\tilde{F}_{(0,p,q)})+T_{(p,0)}(\tilde{S}_{\varepsilon}\times \{0\})=T_{(p,0)}(T(\tilde{S}_{\varepsilon})),\]
\[\mbox{Im}(d(\partial \tilde{F})_{(0,p,q)})+T_{(p,0)}(\tilde{S}_{\varepsilon}\times \{0\})=T_{(p,0)}(T(\tilde{S}_{\varepsilon})),\]
which implies that $\tilde{F}$ and $\partial \tilde{F}$ are transversal to $\tilde{S}_{\varepsilon}\times \{0\}$. As in the proof of the transversality theorem in Section 2.3 of Guillemin and Pollack (1974), if both $\pi$ and $\partial \pi$ have a common regular value $q$, then $\tilde{G}_q$ and $\partial \tilde{G}_q$ are transversal to $\tilde{S}_{\varepsilon}\times \{0\}$. This completes the proof of Step 7.
\end{proof}

\begin{step}
Define $X\equiv \tilde{G}_q^{-1}(\tilde{S}_{\varepsilon}\times \{0\})$. Then, $X$ is a compact $1$-dimensional $C^1$ manifold with boundary, where $\partial X=\{(0,p^*)|p^*\in E^*\}\cup \{(1,q)\}$.\footnote{Note that, because $\varepsilon<\varepsilon'$, we have that $q\in \tilde{S}_{\varepsilon}$.}
\end{step}

\begin{proof}[{\bf Proof of Step 8}]
Because of Step 6,\footnote{Actually, we use property 4) of $\tilde{f}_{\delta}$.} we have that $X=G_q^{-1}(S_{\varepsilon}\times \{0\})$. Because $G_q$ is continuous, $X$ is closed. Moreover, because $X\subset [0,1]\times S_{\varepsilon}$, $X$ is compact. On the other hand, because $\tilde{G}_q$ and $\partial \tilde{G}_q$ are transversal to $\tilde{S}_{\varepsilon}\times \{0\}$, we have that $X$ is $1$-dimensional $C^1$ manifold with boundary, where $\partial X=X\cap (\{0,1\}\times \tilde{S}_{\varepsilon})$. Then,
\[X\cap (\{0,1\}\times \tilde{S}_{\varepsilon})=\{(0,p^*)|p^*\in E^*\}\cup \{(1,q)\},\]
and our claim is correct. This completes the proof of Step 8.
\end{proof}

\begin{step}
Suppose that $\partial X$ is oriented by the orientation induced by the outward normal vector field. Then, $\mathrm{sign}_{\partial X}(1,q)=(-1)^{n-1}$, and for all $p^*\in E^*$,
\[\mathrm{sign}_{\partial X}(0,p^*)=(-1)^n\mathrm{index}(p^*).\]
\end{step}

\begin{proof}[{\bf Proof of Step 9}]
Let $n(p)=p$. Then, $n(p)$ is a normal vector field on $\tilde{S}_{\varepsilon}$. Therefore, using this vector field, we can obtain an orientation $\mbox{sign}_{\tilde{S}_{\varepsilon}}$: that is, for a linearly independent family $v_1,...,v_{n-1}\in T_p(\tilde{S}_{\varepsilon})$,
\[\mbox{sign}_{\tilde{S}_{\varepsilon}}(p,v_1,...,v_{n-1})=\mbox{sign}(\det(p,v_1,...,v_{n-1})).\]
Note that $[0,1]$ can be oriented by the usual function
\[\mbox{sign}_{[0,1]}(t,a)=\mbox{sign}(a).\]
Therefore, $[0,1]\times \tilde{S}_{\varepsilon}$ is oriented by the product orientation. As we argued in Subsection 2.5, the tangent bundle $T(\tilde{S}_{\varepsilon})$ is oriented by the following function $\mbox{sign}_{T(\tilde{S}_{\varepsilon})}$:\footnote{It is easy to verify this claim is true if $v_i=(v_i',0)$ and $w_i=(0,w_i')$ for all $i$, and we can easily extend this result to general cases.}
\begin{align*}
&~\mbox{sign}_{T(\tilde{S}_{\varepsilon})}((p,v),v_1,...,v_{n-1},w_1,...,w_{n-1})\\
=&~\mbox{sign}(\det((p,0),v_1,...,v_{n-1},(0,p),w_1,...,w_{n-1})).
\end{align*}
Therefore, $X$ is also oriented by the preimage orientation. Suppose that $t\in \{0,1\}$, $(t,p)\in X$, and $(1,v)\in T_{(t,p)}(X)$. Choose $v_1,...,v_{n-1}\in T_p(\tilde{S}_{\varepsilon})$ such that $\det(p,v_1,...,v_{n-1})>0$. Because $\partial \tilde{G}_q$ is transversal to $\tilde{S}_{\varepsilon}\times \{0\}$, there exists $v_1',...,v_{n-1}'\in T_p(\tilde{S}_{\varepsilon})$ such that $\det(p,v_1',...,v_{n-1}')>0$ and for $(v_i',w_i')=d(\tilde{G}_q)_{(t,p)}(0,v_i')$, $x=((v_1',w_1'),...,(v_{n-1}',w_{n-1}'),(v_1,0),...,(v_{n-1},0))$ consists of a basis of $T_{\tilde{G}_q(t,p)}(T(\tilde{S}_{\varepsilon}))$.\footnote{Note that $d(\tilde{G}_q)_{(t,p)}(0,v')=(v',w')$ by the definition of $\tilde{G}$.} By the definition of the preimage orientation,
\begin{align*}
&~\mbox{sign}_X((t,p),(1,v))\\
=&~\mbox{sign}_{T(\tilde{S}_{\varepsilon})}((p,0),x)\times \mbox{sign}_{[0,1]\times \tilde{S}_{\varepsilon}}((t,p),(0,v_1'),...,(0,v_{n-1}'),(1,v)).
\end{align*}
Note that, for $r_1,...,r_n\in T_{(t,p)}([0,1]\times \tilde{S}_{\varepsilon})$,
\[\mbox{sign}_{[0,1]\times \tilde{S}_{\varepsilon}}((t,p),r_1,...,r_n)=\mbox{sign}(\det(r_1,(0,p),r_2,...,r_n)),\]
and thus, we have that
\begin{align*}
&~\mbox{sign}_{[0,1]\times \tilde{S}_{\varepsilon}}((t,p),(0,v_1'),...,(0,v_{n-1}'),(1,v))\\
=&~\mbox{sign}((-1)^{n+1}\det(p,v_1',...,v_{n-1}'))=(-1)^{n+1}.
\end{align*}
On the other hand,
\begin{align*}
\mbox{sign}_{T(\tilde{S}_{\varepsilon})}((p,0),x)=&~\mbox{sign}((-1)^{n^2+1}\det(p,w_1',...,w_{n-1}')\det(p,v_1,...,v_{n-1}))\\
=&~(-1)^{n+1}\mbox{sign}(\det(p,w_1',...,w_{n-1}')).
\end{align*}
Hence, we obtain that
\[\mbox{sign}_X((t,p),(1,v))=\mbox{sign}(\det(p,w_1',...,w_{n-1}')).\]
Suppose that $t=1$. Then, $p=q$. Because $w_i'=Df^q(q)v_i'$, by Step 4 and (\ref{EVAL}), we have that $\mbox{sign}_X((1,q),(1,v))=(-1)^{n-1}$, and thus $\mbox{sign}_{\partial X}(1,q)=(-1)^{n-1}$.

Next, suppose that $t=0$. Then, $p=p^*$ for some $p^*\in E^*$. Because $w_i'=Df(p)v_i'$, by Step 4, we have that $\mbox{sign}_X((0,p^*),(1,v))=(-1)^{n-1}\mbox{index}(p^*)$. Because $(1,v)$ is an inward vector at $T_{(0,p^*)}(X)$, we have that
\[\mbox{sign}_{\partial X}(0,p^*)=\mbox{sign}_X((0,p^*),(-1,-v))=(-1)^n\mbox{index}(p^*).\]
This completes the proof of Step 9.
\end{proof}

Because any compact $1$-dimensional $C^1$ manifold is a finite union of connected components, and each of them is diffeomorphic to either $S^1$ or $[0,1]$, we have that
\[\sum_{(t,p)\in \partial X}\mbox{sign}_{\partial X}(t,p)=0.\]
By Step 9,
\[\sum_{p^*\in E^*}\mbox{index}(p^*)=1,\]
as desired. This completes the proof of Theorem 1.

\section{Remarks}
Our Theorem 1 is, at least formally, different from Proposition 5.6.1 of Mas-Colell (1985). That is, Mas-Colell treated a function $f$ whose domain is $S$, whereas, in our Theorem 1, we treat a function $f$ defined on $\mathbb{R}^n_{++}$ that is homogeneous of degree zero. Moreover, the definitions of $\mbox{index}(p^*)$ are slightly different. However, readers can check that these differences are superficial. First, the restriction of a function treated in Theorem 1 into $S$ satisfies all the assumptions of Mas-Colell's proposition. Conversely, a function treated in Mas-Colell's proposition can be extended to $\mathbb{R}^n_{++}$ by the following formula $f(p)=f(p/\|p\|)$, and this extended function satisfies all requirements of our Theorem 1. Furthermore, Mas-Colell mentioned in Section 5.3 of his book that his definition of $\mbox{index}(p^*)$ is equivalent to our definition. Therefore, actually, our Theorem 1 is equivalent to Mas-Colell's proposition.

This result is closely related to the Poincar\'e-Hopf index theorem. This theorem claims for every smooth vector field in a compact smooth manifold, the sum of the ``indices'' of zeroes are the same as the Euler characteristic of this manifold. This theorem is often explained as the existence theorem of ``hair whorl.'' Let us look at the human head as a manifold and consider the direction of hair growth as a vector field. The human head can be seen as the upper hemisphere in $\mathbb{R}^3$, and its Euler characteristic is $1$. Therefore, there must be a zero with a positive index. This is the hair whorl. From this explanation, this theorem is also called the ``hairy ball theorem.''

However, upon closer examination, we find something strange. What would happen to the head of an alopecia areata patient? In this case, if we remove the part of the head that is bald from the upper hemisphere, a new manifold is created. This manifold is diffeomorphic to the compact domain of a large disc from which a small disc is removed. It is easy to see that the Euler characteristic of this manifold is $0$, because the rotating vector field has no zero. Hence, there must either be no hair whorl or an even number of hair whorls, although we have never heard of a hair whorl disappearing caused by alopecia areata. In other words, if a region is removed by alopecia areata, the Poincar\'e-Hopf index theorem does not appear to hold.

The reason for this lies in a hidden assumption in the Poincar\'e-Hopf index theorem. If the manifold has a boundary, then this theorem requires that the vector field is outward on that boundary. Normal human head hair satisfies this condition, and thus this theorem is applicable. However, this condition could not be satisfied on the boundary of the hair loss area caused by alopecia areata, and thus this theorem is not applicable.

With this in mind, let us look at our Theorem 1. Unfortunately, $S$ is not compact, and thus the index theorem is not applicable. On the other hand, $S_{\varepsilon}$ is compact. Although there is a problem that $S_{\varepsilon}$ is not a manifold, we ignore this problem and check whether the index theorem is applicable. Then, we encounter the problem that we indicated above. That is, our vector field $f$ is not necessarily outward on $\partial S_{\varepsilon}$. Therefore, the index theorem is still inapplicable.

To solve this problem, Mas-Colell prepared a function $f^q$, to which the conclusion of the index theorem holds. It is known that the sum of indices is a homotopy invariant. Because $f^q$ is homotopic to $f$, the sum of indices of $f$ is equal to that of $f^q$. However, $f^q$ has a unique zero $q$, whose index is $+1$. Hence, the sum of indices of $f$ is also $+1$. This is the `first' proof of this theorem provided by Mas-Colell. Curiously, Mas-Colell made the `second' proof of this theorem, but this proof is profound and unfamiliar.

There are a number of immediately visible problems with this proof. First, $S_{\varepsilon}$ is not a manifold. Although $\tilde{S}_{\varepsilon}$ is a manifold, it is not compact. If we change the definition of $S_{\varepsilon}$, then it may possibly be a manifold with boundary. However, even in this case, $[0,1]\times S_{\varepsilon}$ is not a manifold with boundary. As we have already seen in the proof, to show that the sum of indices is homotopy invariant, we need to use the transversality of $G_q(t,p)$ to prove that $X$ is a one-dimensional manifold and that the sum of orientations of its boundary is $0$. However, for the reasons mentioned above, the domain of $G_q$ is not a manifold, and thus it is unclear whether the result of transversality is applicable. Step 6 is essential for solving this problem. In fact, Mas-Colell himself proved Step 6, but he did not explain why it is needed at all.

Second, to apply the transversality result to $G_q$, the range must be a manifold. However, because $f$ is a vector field, the range of this function is not a manifold. Therefore, it is necessary to make the range of $G_q$ a manifold artificially by introducing the notion of the tangent bundle, and then apply the transversality theorem to $F$ and show that there exists $q$ such that $G_q$ is transversal. However, Mas-Colell did not mention the tangent bundle in his proof.

Hence, Mas-Colell's proof has a number of significant gaps. We think that our proof fulfills these gaps completely, and thus it is worthwhile to many economists that want to use the index theorem.

\appendix
\section{Proof of the Classification Theorem}
The classification theorem for one-dimensional manifolds can be found in Guillemin and Pollack (1974). However, as already mentioned, their proof uses the Morse function. Since the definition of the Morse function involves second-order derivatives, this argument cannot be applied to $C^1$ manifolds.

Milnor (1965) provided another proof using arc length. However, this proof is too difficult to understand. On the other hand, Exercise 1.4.9 in Hirsch (1976) asks us to prove this theorem. However, no answer is provided there. Moreover, Hirsch stated in the preface to his book that the exercises contain open problems. Therefore, the existence of this exercise does not solve this problem in any way. Gale (1987) treated this classification theorem. However, he only proved that there is a function that is not a diffeomorphism but only a homeomorphism. Thus, this cannot be used for our purpose.

In fact, as of November 2022, using Google search, we found a document that proved this theorem rigorously on the server of the University of Lisbon. Upon investigation, we found that the document was written by Gustavo Granja, who provided it for his student. However, there is no signature of the author inside the document, and thus this monograph is not in a position to be added to the reference list. In addition, since it is a lecture document, it could disappear at any time because of Granja's transfer or other reasons. Therefore, in this section, we reconstruct Granja's proof only as we need it.

Recall the definition of $S^1$:
\[S^1=\{x\in \mathbb{R}^2|\|x\|=1\}.\]
Suppose that $k\in \mathbb{N}\cup\{0,\infty\}$ is given, and suppose that $X$ is a $1$-dimensional compact and connected $C^k$ manifold. Our goal is to show the following theorem.

\begin{theorem}
$X$ is $C^k$ diffeomorphic to either $S^1$ or $[0,1]$.
\end{theorem}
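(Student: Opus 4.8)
The plan is to prove the theorem by constructing a maximal "arc-length–like" parametrization of $X$ and showing that its image is all of $X$, which forces $X$ to be $S^1$ or $[0,1]$. First I would build local charts adapted to the problem: around each point $x\in X$ there is a local parametrization $\phi:U\to V$ with $U$ a relatively open interval in $\mathbb{R}$ (or in $H^1=[0,\infty)$ if $x\in\partial X$) and $V$ an open neighborhood of $x$ in $X$. Two such charts $\phi_1,\phi_2$ whose images overlap have a transition map $\phi_2^{-1}\circ\phi_1$ that is a $C^k$ diffeomorphism between open subsets of $\mathbb{R}$ (or $H^1$), hence strictly monotone on each component of its domain. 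By replacing some charts $t\mapsto \phi(t)$ with $t\mapsto\phi(-t)$ (reflecting the parameter), I can arrange that every transition map is \emph{increasing}. This orientation-matching step is routine but must be done carefully when $\partial X\neq\emptyset$, since a boundary chart can only be reparametrized by $t\mapsto c-t$ if its domain is rewritten as a subset of $H^1$ again.

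Next I would run the gluing/continuation argument. Start from one chart $\phi_0:(a_0,b_0)\to V_0$ (or $[0,b_0)\to V_0$ at a boundary point) and consider the set of all "compatible extensions": pairs $(J,\psi)$ where $J\subset\mathbb{R}$ is an interval containing $(a_0,b_0)$, $\psi:J\to X$ is $C^k$, locally a parametrization, restricts to $\phi_0$, and agrees with every original chart up to an increasing transition map. Using Zorn's lemma (or a direct supremum/infimum construction on the endpoints), obtain a maximal such $(J,\psi)$ with $J=(\alpha,\beta)$, possibly with one or both endpoints included when they map to boundary points of $X$. The image $\psi(J)$ is open in $X$ by construction. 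The heart of the matter is to show it is also closed, so that connectedness of $X$ gives $\psi(J)=X$. For this I would take a sequence $t_n\uparrow\beta$ (if $\beta<\infty$) and use compactness of $X$ to extract $\psi(t_{n_k})\to x^*\in X$; then a chart around $x^*$ either lets me extend $\psi$ past $\beta$ — contradicting maximality — or, if the chart "reconnects" to the part of the image near $\alpha$, forces $X$ to close up into a circle. Handling the case $\beta=+\infty$: here I would instead use compactness to show the curve cannot escape to infinity without self-overlap, again producing the circle.

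Finally I would conclude the dichotomy. If no reconnection occurs, $\psi$ extends to a proper $C^k$ injection of a half-open or closed bounded interval onto $X$; since $X$ is compact the interval must be closed and bounded, and after an affine reparametrization $X$ is $C^k$ diffeomorphic to $[0,1]$ (its two endpoints being exactly $\partial X$). If a reconnection occurs, I get a $C^k$ surjection of a circle's worth of parameter onto $X$ which is injective except for the single identification at the two ends; one checks the glued map $[\alpha,\beta]/(\alpha\sim\beta)\to X$ is a $C^k$ diffeomorphism onto $X$, and standard charts identify the quotient circle with $S^1$ (this is where matching the chart \emph{at} the gluing point — derivatives from both sides — uses that all transitions were made increasing). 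The main obstacle is precisely this closedness/reconnection step: proving that the maximal parametrized arc cannot stop short of all of $X$ and cleanly separating "extend further" from "close into a loop." The orientation bookkeeping for boundary charts is the secondary technical nuisance. Everything else — monotonicity of transition maps, Zorn's lemma, the affine reparametrization — is routine.
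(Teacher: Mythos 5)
Your continuation strategy --- a maximal parametrized arc, an open-plus-closed argument for its image, then the interval/circle dichotomy --- is essentially Milnor's route, and it is genuinely different from the paper's, which covers $X$ by finitely many charts and glues them two at a time after classifying how two overlapping chart images can meet (one or two components of the overlap, each sharing an endpoint with the chart interval). Your route can be made to work, but as written it has a genuine gap at precisely the step you call the heart of the matter, and the gap is not only the reconnection bookkeeping. After extracting $\psi(t_{n_k})\to x^*$ you still must show that $\lim_{t\to\beta}\psi(t)$ exists and that a chart $\phi:W\to X$ around $x^*$ meets $\psi(J)$ in an overlap whose component nearest $\beta$ has $\phi^{-1}(x^*)$ as an endpoint; ruling out oscillation in and out of $\phi(W)$, and ruling out a third component of the overlap, is exactly the content of the paper's Steps 1--3 and cannot be skipped.

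More seriously, the extension and gluing steps fail as stated in the $C^k$ category. The transition map $h=\phi^{-1}\circ\psi$ on $(\beta-\varepsilon,\beta)$ is a $C^k$ diffeomorphism onto an interval $(c,w^*)$, but $h'$ need not have a limit, or a nonzero limit, at $\beta$; consequently neither ``extend $\psi$ by $\phi$ composed with an extension of $h$'' nor the concatenation of $\psi$ with $\phi$ on adjacent parameter intervals yields a $C^k$ immersion across the seam --- it is only continuous there. One must first \emph{modify} $\psi$ near $\beta$, interpolating between $h$ and an affine map by a bump function supported in a slightly larger chart; this is exactly what the paper's Steps 7--8 do, and it is why the paper introduces condition B (every chart extends to a neighborhood of its closure, arranged by shrinking the finite cover). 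The same smoothing is needed at the two seams of your quotient map $[\alpha,\beta]/(\alpha\sim\beta)\to X$; making the transitions increasing does not by itself make the glued map differentiable at the gluing points. A related structural problem: since extending past $\beta$ requires altering $\psi$ on $(\beta-\varepsilon,\beta)$, the altered map is not an extension of the old one, so Zorn's lemma applied to the maps themselves does not produce the desired contradiction with maximality; you would need to order by the images $\psi(J)$ instead, or follow the paper in replacing the maximal-element argument by induction over a finite subcover.
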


\begin{proof}
In this proof, we call a function $f:W\to X$ a {\bf local parametrization of $X$} if and only if it is a $C^k$ diffeomorphism from $W$ onto $f(W)$, where either $W=(a,b)$ or $W=[a,b)$ for some $a,b\in \mathbb{R}$ with $a<b$ and $f(W)$ is open in $X$. By definition, we can easily check that $W=[a,b)$ means that $f(a)\in \partial X$.

Let $\varphi:U\to X$ and $\psi:V\to X$ be given $C^k$ functions such that $\varphi^{-1}$ and $\psi^{-1}$ are local parametrizations whose domains are $I$ and $J$, respectively, and $U,V\subset X$ satisfies
\begin{equation}\label{REST}
U\cap V\neq \emptyset,\ U\setminus V\neq \emptyset, V\setminus U\neq \emptyset.
\end{equation}
If a pair $(\varphi,\psi)$ of functions satisfies the above requirements, then we say that $(\varphi,\psi)$ satisfies {\bf condition A}.

We separate the proof into nine steps. In Steps 1-6, we assume that $(\varphi,\psi)$ satisfies condition A. By definition, $I,J\subset \mathbb{R}$ are bounded intervals.\footnote{We call a subset of $\mathbb{R}$ an {\bf interval} if this set is a convex set that contains at least two points. If $I$ is a bounded interval whose closure is $[a,b]$, then $a,b$ are called {\bf endpoints} of $I$.} Define $\alpha:\varphi(U\cap V)\to \psi(U\cap V)$ as $\alpha=\psi\circ \varphi^{-1}$. It is easy to show that $\varphi(U\cap V)$ is a (possibly infinitely many) disjoint union of relatively open separated intervals of I.

\setcounter{step}{0}
\begin{step}
Suppose that $K$ is a connected component of $\varphi(U\cap V)$ and $a$ is an endpoint of $K$. Then, $a\notin K$ and if $a\in I$, then
\[c\equiv \lim_{t\to a,\ t\in K}\alpha(t)\notin J,\]
and $c$ is an endpoint of $J$.
\end{step}

\begin{proof}[{\bf Proof of Step 1}]
Because $J$ is bounded, $\alpha(t)$ is bounded on $K$. Moreover, because $\alpha$ is a homeomorphism defined on $K$, it is monotone, which implies that $c$ is defined and is a real number.

Because $K$ is relatively open in $I$, if $a$ is not an endpoint of $I$, then $a\notin K$. Suppose that $a$ is an endpoint of $I$ and $a\in K$. Then, $a\in I$ and thus $I=[a,b)$ for some $b>a$, which implies that $\varphi^{-1}(a)\in \partial X$ and $\alpha(a)$ is also an endpoint of $J$. Let $\sup K=d$. Suppose that $d=b$. Then, $U\subset V$, which contradicts (\ref{REST}). Hence, $d<b$ and $a<d<b$. Let $\delta=\lim_{t\to d,\ t\in K}\alpha(t)$. If $\delta\notin J$, then $J=[\alpha(a),\delta)$, and thus $V\subset U$, which contradicts (\ref{REST}). Therefore, $\delta\in J$. Because $\varphi^{-1}$ and $\psi^{-1}$ are continuous, $\varphi^{-1}(d)=\psi^{-1}(\delta)$. Since $\psi$ is a homeomorphism, $U\cap V$ must include a neighborhood of $\psi^{-1}(\delta)$, and thus $K$ includes $[a,d+\varepsilon)$ for some $\varepsilon>0$, which is a contradiction. Thus, we conclude that $a\notin K$.

Next, suppose that $c\in J$. If $\varphi^{-1}(a)=\psi^{-1}(c)$, then $a\in K$, which is a contradiction. Therefore, $\varphi^{-1}(a)\neq \psi^{-1}(c)$, which contradicts the continuity of $\varphi^{-1}$ and $\psi^{-1}$. Hence, we have that $c\notin J$. This completes the proof.
\end{proof}

\begin{step}
$\varphi(U\cap V)$ consists of either one or two open intervals, and if $K$ is a connected component of $\varphi(U\cap V)$, then $K$ has at least one endpoint in common with $I$.
\end{step}

\begin{proof}[{\bf Proof of Step 2}]
Suppose that $K$ is a connected component of $\varphi(U\cap V)$. By Step 1, we have that $K$ is an open interval, and thus $K=(a,b)$ for some $a,b$ such that $a<b$. Let $c,d$ be endpoints of $I$ such that $c<d$. Then, $c\le a<b\le d$. If $c<a$ and $b<d$, then $V=\varphi^{-1}(K)\subset U$, which contradicts (\ref{REST}). Therefore, we have that either $c=a$ or $b=d$, which implies that $K$ has at least one endpoint in common with $I$. Therefore, there are at most two such $K$, which completes the proof of Step 2.
\end{proof}

\begin{step}
If $\varphi(U\cap V)$ consists of two connected components, then $I$ and $J$ are open, and $U\cup V$ is homeomorphic to $S^1$.
\end{step}

\begin{proof}[{\bf Proof of Step 3}]
Suppose that $a,b$ are endpoints of $I$ such that $a<b$. Then, there are two connected components $K_1=(a,c)$ and $K_2=(d,b)$ of $\varphi(U\cap V)$ such that $c\le d$. Define
\[e=\lim_{t\uparrow c}\alpha(t),\ f=\lim_{t\downarrow d}\alpha(t).\]
By Step 1, we have that $e,f\notin J$, and $e,f$ are endpoints of $J$. If $e=f$, then $\alpha$ is monotone in $K_1\cup K_2$, and because $J$ is convex, there exists $r\in V\setminus U$ such that $\psi(r)=e=f$. However, this implies that $e,f\in J$, which is a contradiction. Therefore, $e\neq f$, which implies that $J$ is an open interval. By symmetrical arguments, we can show that $I$ is an open interval, and thus $I=(a,b)$.

Thus, either $e<f$ or $e>f$. Because arguments can be done symmetrically, we assume that $e>f$, and thus $J=(f,e)$. Then, $\alpha$ is increasing on both $K_i$, and $\alpha(K_1)=(g,e)$ and $\alpha(K_2)=(f,h)$ for
\[g=\lim_{t\downarrow a}\alpha(t),\ h=\lim_{t\uparrow b}\alpha(t).\]
Because $e>f$, we have that $g\ge h$. If $g=h$, then $(U\cup V)\setminus U$ is a singleton, and thus $U\cup V$ can be seen as a one-point compactification of $U$, which is homeomorphic to $S^1$. If $g>h$, then choose an increasing homeomorphism $\lambda:(0,\pi)\to (a,b)$ and $\mu:[\pi,2\pi]\to [h,g]$, and define
\[f(e^{i\theta})=\begin{cases}
\varphi^{-1}(\lambda(\theta)) & \mbox{if }0<\theta<\pi,\\
\psi^{-1}(\mu(\theta)) & \mbox{if }\pi\le \theta\le 2\pi.
\end{cases}\]
We can easily check that $f$ is a homeomorphism between $S^1$ and $U\cup V$. This completes the proof of Step 3.
\end{proof}

\begin{step}
Suppose that $\varphi(U\cap V)=K$ for an open interval $K$ whose endpoints are in common with $I$. Then, $I=[a,b)$ and $K=(a,b)$ for some $a,b$ with $a<b$, and $U\cap \partial X\neq \emptyset$. If $V\cap \partial X=\emptyset$, then $U\cup V$ is homeomorphic to $[0,1)$. If $V\cap \partial X\neq \emptyset$, then $U\cup V$ is homeomorphic to $[0,1]$.
\end{step}

\begin{proof}[{\bf Proof of Step 4}]
Suppose that $K=(a,b)$. Because of (\ref{REST}), we have that $I\neq K$, and thus $I=[a,b)$. Therefore, $\varphi^{-1}(a)\in \partial X$, and thus $U\cap \partial X=\{\varphi^{-1}(a)\}$. Let $c,d$ be endpoints of $\psi(U\cap V)$, and
\[c=\lim_{t\downarrow a}\alpha(t).\]
By Step 1, $c\notin J$ and $c$ is an endpoint of $J$. Because of (\ref{REST}), we have that $d\in J$, and thus $\psi^{-1}(d)\in V\setminus U$. Suppose that $e$ is another endpoint of $J$. Then, $J$ is $(e,c)$, $(c,e)$, or $[e,c)$. If $J=(e,c)$, then $J\cap \partial X=\emptyset$ and $e<d$. In this case, choose homeomorphisms $\lambda:[0,1/2)\to [a,b)$ and $\mu:[1/2,1)\to (e,d]$, and define
\[f(x)=\begin{cases}
\varphi^{-1}(\lambda(x)) & \mbox{if }0\le x<1/2,\\
\psi^{-1}(\mu(x)) & \mbox{if }1/2\le x<1.
\end{cases}\]
Then, we can easily check that $f$ is a homeomorphism between $[0,1)$ and $U\cup V$. The cases $J=(c,e)$ and $J=[e,c)$ can be similarly treated. This completes the proof.
\end{proof}

\begin{step}
Suppose that $\varphi(U\cap V)=K$ for an open interval $K$ and there exists an endpoint $b$ of $K$ such that it is not an endpoint of $I$. Then, $U\cup V$ is homeomorphic to $Y$, where $Y$ is $(0,1),[0,1)$ or $[0,1]$. Suppose that $f:Y\to (U\cup V)$ is the diffeomorphism. If $Y=(0,1)$, then $(U\cup V)\cap \partial X=\emptyset$. If $Y=[0,1)$, then $(U\cup V)\cap \partial X=\{f(0)\}$. If $Y=[0,1]$, then $(U\cup V)\cap \partial X=\{f(0),f(1)\}$.
\end{step}

\begin{proof}[{\bf Proof of Step 5}]
This step can be proved in almost the same manner as the proof of Step 4. Hence, we omit the proof.
\end{proof}

\begin{step}
If $U\cup V$ is homeomorphic to $S^1$ or $[0,1]$, then $U\cup V=X$.
\end{step}

\begin{proof}[{\bf Proof of Step 6}]
In this case, $U\cup V$ becomes a clopen set. Therefore, by the connectedness of $X$, we have that $U\cup V=X$. This completes the proof of Step 6.
\end{proof}

Choose any pair $(\varphi,\psi)$ that satisfies condition A. Suppose that there exist an open set $U'$ of $X$ and an extension $\varphi':U'\to \mathbb{R}$ of $\varphi$ such that the closure $\bar{U}$ of $U$ is included in $U'$ and $(\varphi',\psi)$ satisfies condition A. Then, we say that $(\varphi,\psi)$ satisfies {\bf condition B}.

In Steps 7-8, we suppose that $(\varphi,\psi)$ satisfies condition B and $\varphi':U'\to I'$ is fixed.

\begin{step}
Suppose that $k\ge 1$, and $\varphi(U\cap V)$ has two connected components. Then, $U\cup V$ is $C^k$ diffeomorphic to $S^1$.
\end{step}

\begin{proof}[{\bf Proof of Step 7}]
As in the proof of Step 3, we define $a,b,c,d,e,f,g,h$. Then, $I=(a,b)$, $\varphi(U\cap V)=(a,c)\cup (d,b)$,
\[\lim_{t\uparrow c}\alpha(t)=e,\ \lim_{t\downarrow d}\alpha(t)=f,\]
\[\lim_{t\downarrow a}\alpha(t)=g,\ \lim_{t\uparrow b}\alpha(t)=h,\]
and $e\neq f$. Because arguments can be done symmetrically, we assume that $e>f$, and thus $J=(f,e)$. Then, $\alpha$ is increasing on both $K_i$, and $\alpha(K_1)=(g,e)$ and $\alpha(K_2)=(f,h)$. If $g=h$, then $U\cup V=U\cup \{\psi(g)\}$. By Step 3, in this case $U\cup V$ is homeomorphic to $S^1$, and thus by Step 6, $U\cup V=X$, which implies that $V\subset X=U'$. This contradicts that $U',V$ satisfies (\ref{REST}). Therefore, we must have that $h<g$. Choose an increasing linear function $\lambda:[-\delta,\pi+\delta]\to [a-\varepsilon,b+\varepsilon]$ such that $\delta<\pi/2$, $\lambda(0)=a,\ \lambda(\pi)=b$, and $[a-\varepsilon,c)\cup (d,b+\varepsilon]\subset \varphi'(U'\cap V)$, and thus the domain of $\alpha$ can be extended to $[a-\varepsilon,c)\cup (d,b+\varepsilon]$. Define $\beta=\alpha\circ \lambda$. Choose $p,q\in (h,g)$ such that
\[\beta(\pi+\delta/2)<p<q<\beta(-\delta/2).\]
Because $\beta(\pi+\delta)\le \beta(-\delta)$, we have that such $p,q$ exist. Let $\mu(\theta)=((2\pi-\theta)p+(\theta-\pi)q)/\pi$ on $[\pi,2\pi]$, and choose any $C^{\infty}$ function $\nu:[\pi,2\pi]\to [0,1]$ such that $\nu^{-1}(0)=[\pi,\pi+\delta/2]\cup [2\pi-\delta/2,2\pi]$, $\nu^{-1}(1)=[\pi+\delta,2\pi-\delta]$, and $\nu'(\theta)\neq 0$ if $\theta\in (\pi+\delta/2,\pi+\delta)\cup (2\pi-\delta,2\pi-\delta/2)$, and define
\[g(\theta)=(1-\nu(\theta))(\beta(\theta)+\beta(\theta-2\pi))+\nu(\theta)\mu(\theta),\]
where we define $\beta(\theta)=0$ if $\theta\notin [-\delta,\pi+\delta]$. Then, we can check that $g$ is $C^k$ on $[\pi,2\pi]$, $g'(\theta)>0$ for all $\theta\in [\pi,2\pi]$, and $g\equiv \beta$ on some neighborhood of $\{\pi,2\pi\}$. Define 
\[f(e^{i\theta})=\begin{cases}
\varphi^{-1}(\lambda(\theta)) & \mbox{if } 0<\theta<\pi,\\
\psi^{-1}(g(\theta)) & \mbox{if } \pi\le \theta\le 2\pi.
\end{cases}\]
Then, $f$ is a $C^k$ diffeomorphism between $S^1$ and $U\cup V$. This completes the proof of Step 7.
\end{proof}

\begin{step}
Suppose that $k\ge 1$, and $\varphi(U\cap V)$ is an open interval. Then, ``homeomorphic'' that appears in Steps 4-5 can be replaced with ``$C^k$ diffeomorphic.''
\end{step}

\begin{proof}[{\bf Proof of Step 8}]
We treat only the case in which $I,J$ are open intervals and $\varphi(U\cap V)=K$ for an open interval $K$, because the remaining cases can be treated in almost the same manner. It is easy to show that there exists a $C^{\infty}$ diffeomorphism $g_1:I\to (0,1)$ and $g_2:J\to (1,2)$ such that $g_1(\varphi(U\cap V))=(1/2,1)$ and $g_2(\psi(U\cap V))=(1,3/2)$. Therefore, without loss of generality, we can assume that $I=(0,1), J=(1,2)$, and $\varphi(U\cap V)=(1/2,1),\psi(U\cap V)=(1,3/2)$. This implies that $\alpha$ is increasing. Because $(\varphi,\psi)$ satisfies condition B, $\alpha$ can be extended to $(1/2,1+\varepsilon]$ for some $\varepsilon\in (0,1/2)$ such that $\alpha(1+\varepsilon)<2$. Choose any $C^{\infty}$ function $\lambda:(0,3/2)\to [0,1]$ such that $\lambda(x)=1$ if $x\le 1+\varepsilon/2$, $\lambda(x)=0$ if $x\ge 1+\varepsilon$, and $\lambda'(x)<0$ for all $x\in (1+\varepsilon/2,1+\varepsilon)$. Define
\[g(x)=\alpha(1+\varepsilon)+2(x-1)(2-\alpha(1+\varepsilon)),\]
and
\[f(x)=\begin{cases}
\varphi^{-1}(x) & \mbox{if }x<1,\\
\psi^{-1}(\lambda(x)\alpha(x)+(1-\lambda(x))g(x)) & \mbox{if }x\ge 1.
\end{cases}\]
Then, $f$ is a $C^k$ diffeomorphism between $(0,3/2)$ and $U\cup V$. This completes the proof of Step 8.
\end{proof}

\begin{step}
$X$ is $C^k$ diffeomorphic to $S^1$ or $[0,1]$.
\end{step}

\begin{proof}[{\bf Proof of Step 9}]
In this proof, let $\bar{W}$ denote the closure of $W$. Choose any open covering $(U_i)$ of $X$ such that there exists a $C^k$ diffeomorphism $\varphi_i:U_i\to A_i$ such that $A_i$ is either $(a,b)$ or $[a,b)$. Shrinking $U_i$ if needed, we can assume that $\varphi_i$ can be extended to some open set $U_i'$ of $X$ that includes $\bar{U}_i$, and $\varphi_i$ is a $C^k$ diffeomorphism defined on $U_i'$. Because $X$ is compact, there exists a finite subcovering $(U_{i_j})_{j=1}^n$. Define
\[V_j=U_{i_1}\cup ... \cup U_{i_j}.\]
Renumbering and removing some $U_{i_j}$ if needed, we can assume that
\[U_{i_j}\cap V_{j-1}\neq \emptyset,\ U_{i_j}\setminus V_{j-1}\neq \emptyset,\ V_{j-1}\setminus U_{i_j}\neq \emptyset.\]
Shrinking $U_{i_1}',...,U_{i_n}'$ if needed, we can assume without loss of generality that
\[V_{j-1}\setminus U_{i_j}'\neq \emptyset.\]
If $n=1$, then $X=U_{i_1}$ is $C^k$ diffeomorphic to either $[0,1)$ or $(0,1)$. However, because $X$ is compact, it is impossible. Hence, we have that $n\ge 2$. By Steps 7-8, there exists a $C^k$ diffeomorphism $\psi_2:V_2\to I_2$ such that $I_2$ is $S^1$, $[0,1]$, $(0,1)$, or $[0,1)$. Because $X$ is compact, $V_2=X$ implies that $I_2$ is either $S^1$ or $[0,1]$, and by Step 6, the converse is also true. Therefore, $I_2$ is $S^1$ or $[0,1]$ if and only if $n=2$, and if $n>2$, then $I_2$ is either $(0,1)$ or $[0,1)$. Moreover, if $I_2=[0,1)$, then $\psi_2^{-1}(0)\in \partial X$. Hence, $(\varphi_3,\psi_2)$ satisfies condition B. Inductively, we can assume that for each $j\in \{2,...,n\}$, there is a $C^k$ diffeomorphism $\psi_j:V_j\to I_j$ that satisfies the same condition as $\psi_2$, and because $V_n=X$, $I_n$ is either $S^1$ or $[0,1]$. This completes the proof of Step 9.
\end{proof}

Step 9 says that our initial claim is correct. This completes the proof.
\end{proof}

\section*{Acknowledgments}
I am grateful to an anonymous referee for his/her helpful suggestions. This work was supported by JSPS KAKENHI Grant Number JP21K01403.

\section*{Reference}

\begin{description}

\item{[1]} Arrow, K. J., Block, H. D., and Hurwicz, L. (1959) ``On the Stability of the Competitive Equilibrium, II.'' Econometrica 27, pp.82-109.

\item{[2]} Arrow, K. J. and Debreu, G. (1954) ``Existence of an Equilibrium for a Competitive Economy''. Econometrica 22, pp.265-290.

\item{[3]} Debreu, G. (1952) ``Definite and Semi-Definite Quadratic Forms.'' Econometrica 20, pp.295-300.

\item{[4]} Debreu, G. (1970) ``Economies with a Finite Set of Equilibria.'' Econometrica 38, pp.387-392.

\item{[5]} Evans, L. C. (2010) {\it Partial Differential Equation: Second Edition}. American Mathematical Society, Providence.

\item{[6]} Gale, D. (1987) ``The Classification of 1-Manifolds: A Take-Home Exam.'' American Mathematical Monthly 94, pp.170-175.

\item{[7]} Guillemin, V. and Pollack, A. (1974) \textit{Differential Topology}. Prentice Hall, New Jersey.

\item{[8]} Hirsch, M. W. (1976) {\it Differential Topology}. Springer, New York.

\item{[9]} Mas-Colell, A. (1985) \textit{The Theory of General Economic Equilibrium: A Differentiable Approach}. Cambridge University Press, New York.

\item{[10]} Milnor, J. W. (1965) {\it Topology from the Differential Viewpoint}. The University Press of Virginia, Charlottesville.

\end{description}

\end{document}